\newcommand{\ignore}[1]{}
\def\bm{\boldsymbol}
\newcommand{\Rmnum}[1]{\uppercase\expandafter{\romannumeral #1\relax}}
\newcommand{\rmnum}[1]{\lowercase\expandafter{\romannumeral #1\relax}}
\newcommand*\diff{\mathop{}\!\mathrm{d}}
\mathchardef\mhyphen="2D
\newtheorem{theorem}{Theorem}[section]
\newtheorem{proposition}[theorem]{Proposition}
\theoremstyle{definition}
\theoremstyle{remark}
\begin{document}
\preprint{}

\title{Data-driven construction of a generalized kinetic collision operator from molecular dynamics}
\author{Yue Zhao}
\affiliation{Department of Computational Mathematics, Science \& Engineering, Michigan State University, MI 48824, USA}%
\author{Joshua W. Burby}
\affiliation{Department of Physics, the University of Texas at Austin, TX 78712}%
\author{Andrew Christlieb}
\affiliation{Department of Computational Mathematics, Science \& Engineering, Michigan State University, MI 48824, USA}%
\affiliation{Department of Mathematics, Michigan State University, MI 48824, USA}%
\author{Huan Lei}
\email{leihuan@msu.edu}
\affiliation{Department of Computational Mathematics, Science \& Engineering, Michigan State University, MI 48824, USA}%
\affiliation{Department of Statistics \& Probability, Michigan State University, MI 48824, USA}%
%\affiliation{Michigan State University, MI 48824.}%

%\date{\today}% It is always \today, today,
             %  but any date may be explicitly specified

\begin{abstract}
We introduce a data-driven approach to learn a generalized kinetic collision operator directly from molecular dynamics. Unlike the conventional (e.g., Landau) models, the present operator takes an anisotropic form that accounts for a second energy transfer arising from the collective interactions between the pair of collision particles and the environment. Numerical results show that preserving the broadly overlooked anisotropic nature of the collision energy transfer is crucial for predicting the plasma kinetics with non-negligible correlations, where the Landau model shows limitations.
\end{abstract}

\maketitle
\section{Introduction}
Kinetic theory provides a useful framework for modelling the non-equilibrium processes of a collection of particles such as gas and plasma. A closed-form binary operator is generally introduced to model the collective effects of the micro-scale particle interactions.  One common choice is the Landau collision operator \cite{landau1937kinetic, killeen1976methods, hinton1983collisional}, which assumes the interactions are chaotic and dominated by small-angle scattering. While the operator can accurately characterize the collisions that occur frequently with small perturbations to particle velocities (e.g., high-temperature, weakly coupled plasma), the operator shows limitations for scenarios with significant short-range interactions or large angle scattering. The Boltzmann collision operator \cite{boltzmann1872weitere, wild1951boltzmann, arkeryd1972boltzmann} explicitly accounts for the deflection angle via the cross-section of the collision kernel. The model assumes independent prior-collision velocities and relies on an empirical form of the cross-section, and hence may not be sufficient for plasma with non-negligible correlations (e.g., lower temperature). In particular, the plasma could undergo a broad range of physical conditions in applications related to inertial confinement fusion \cite{Rinderknecht_kinetic_plasma_reivew_2018, Indirect_drive_ICF_PRL_1_2024}, photon scattering \cite{Dharma_Perrot_PRE_1998}, stopping power \cite{Grabowski_Surh_PRL_2013} and ultracold plasmas \cite{Killian_Science_2007}; the applicability of the collision operators remains less understood.

In principle, the BBGKY hierarchy \cite{kirkwood1946statistical, kirkwood1947statistical, bogoliubov1947kinetic, bhatnagar1954model} enables us to systematically encode the particle interactions into the kinetic equation. However, the derived collision term generally relies on the un-closed higher-order correlations. Delicate closed forms such as the Balescu-Lenard operator \cite{lenard1960bogoliubov, balescu1960irreversible} explicitly model the collective many-particle shielding effects beyond the binary interactions, the formulation involves formidable computational complexity which poses limitations to the practical applications. While semi-analytical studies \cite{Baalrud_Daligault_PRL_2013, Daligault_Baalrud_PRL_2016} incorporate the effective potential of mean force into the cross-section of the Boltzmann operator and achieve improved prediction of the transport coefficients; the quantitative modeling of the full kinetic equation that retains effective correlations remains an open problem.     

This work presents a data-driven approach by learning a collision operator directly from the microscale molecular dynamics (MD) \cite{ciccotti1987simulation, frenkel2023understanding}. The operator takes a generalized metriplectic form \cite{morrison1986paradigm, Morrison_dissipative_Phys_Lett_A_1984} that strictly preserves the mass, momentum, energy conservation, and non-negative entropy production. Unlike most existing collision operators, the present model faithfully captures the inhomogeneous scattering in the plane perpendicular to the particle's relative velocities. This inhomogeneity essentially accounts for the collective interactions between the pair of collision particles and the background particles that have been broadly overlooked in most existing empirical forms but prove crucial when the particle correlations become non-negligible (e.g., lower temperature). Moreover, the present model involves a lower 3-dimensional (3D) computational complexity than that of the 5D Boltzmann and the 7D Balescu-Lenard operator. 

To numerically construct the operator from the full MD is non-trivial, as direct learning involves estimating a 3D probability density function (PDF) of the particle velocities, which could become computationally intractable. To overcome the numerical difficulty, we reformulate the learning process in a weak form so that direct PDF estimation can be transformed into Monte Carlo sampling \cite{hastings1970monte, gilks1995markov, frenkel2023understanding} over the pair of particle velocities. Furthermore, we circumvent the expensive double summation based on the random-batch sampling algorithm \cite{jin2020random, li2020random, jin2021random} proposed for electrostatic computation to achieve efficient training with $\mathcal{O}(10^6)$ particles. We examine the constructed operator with the one-component plasma (OCP). % and the one governed by the Yukawa potential \cite{yukawa1935interaction, kremer1986phase}. 
Numerical results show that the present operator can accurately model the kinetic processes in a much broader Coulomb coupling regime, where the canonical Landau form shows limitations. The improvement reveals the crucial role of the broadly overlooked heterogeneous nature of the collisional interactions %arising from the coupling with background particles 
which, fortunately, can be captured by the present model.  
%show the crucial role of the heterogeneous nature of the collision scattering arising from the coupling with background interactions. While all the operators how  

%%%%=============
\section{Methods}
%% ================
%\textbf{2. Model derivation.}
Let us consider the kinetic process of a spatially homogeneous plasma system where the mean field term does not have a net contribution. Let $f(\bm{v},t)$ denote the PDF of the particle velocity $\bm{v} \in \mathbb{R}^{3}$ at time $t \in \mathbb{R}^{+}$.  Without loss of generality, we further assume that the total momentum is zero, i.e., $\rho \bar{\bm v} =  \int \bm v f(t,\bm v){\rm d} \bm v \equiv 0$. The time evolution is governed by the collision operator $\mathcal{C}[f]$ taking the bilinear form \cite{morrison1986paradigm}, i.e., 
\begin{equation}\label{eq:collision1}
    \dfrac{\partial f(\bm{v},t)}{\partial t} = \nabla \cdot \int \bm{\omega} \left[ f(\bm{v}') \nabla f(\bm{v}) - f(\bm{v}) \nabla' f(\bm{v}') \right] \mathrm{d}\bm{v}',
\end{equation}
where $\bm{\omega}(\bm v, \bm v') \in \mathbb{R}^{3\times 3}$ is a  kernel representing the particle collisional interactions. In particular, by choosing $\bm{\omega}(\bm v, \bm v') \propto \vert\bm{u}\vert^{-1}\bm{\mathcal{P}}$, $\bm u = \bm v- \bm v'$  and $\bm{\mathcal{P}} = (\bm{I} - \bm{u}\bm{u}^T/\vert \bm{u}\vert ^{2})$, the collision operator recovers the canonical Landau form which essentially assumes that the interactions are isotropic in the plane perpendicular to $\bm u$ and therefore could show limitations for plasma with non-negligible correlation.  Alternatively, the Balescu-Lenard operator accounts for the anisotropic shielding effect by explicitly modeling a frequency-dependent dielectric function, which, however, leads to a 7D integro-differential equation with formidable computational complexity. 

%To overcome the above limitations, the main issue lies in faithfully modeling the heterogeneous collisional interactions in the $\bm v$-space. This motivates us to seek a generalized representation of  

To overcome the above limitations, we seek a generalized representation of the collision kernel $\bm \omega(\bm v, \bm v')$ that can faithfully capture the heterogeneous collisional interactions in the velocity space and, meanwhile, strictly preserve the physical and symmetry constraints. In particular, we show that if $\bm \omega(\bm v, \bm v')$ is symmetric non-negative definite and satisfies
\begin{equation}\label{eq:app_conditions}
\begin{aligned}
    \bm{\omega}(\mathcal{U}\bm{v},\mathcal{U}\bm{v}') &= \mathcal{U}\bm{\omega}(\bm{v},\bm{v}')\mathcal{U}^T \\
    \bm{\omega}(\bm{v},\bm{v}') &= \bm{\omega}(\bm{v}',\bm{v}) \\
    \bm{\omega}(\bm{v},\bm{v}')(\bm{v} -\bm{v}') &= \bm{0} ,
\end{aligned}
\end{equation}
where $\mathcal{U}$ is a unitary matrix, the collisional operator in Eq. \eqref{eq:collision1} strictly conserves the mass, momentum and energy, and preserves the frame indifference constraints. Furthermore, it ensures non-negative entropy production and admits the Maxwellian distribution as the equilibrium state. We refer to Appendix \ref{sec:ConstructCol} for the detailed proof. 

The above observation motivates us to propose a generalized collision kernel taking the form 
\begin{equation}\label{eq:NN3}
    \bm{\omega} = \bm{\mathcal{P}} \left(g_{r}^2 \widetilde{\bm{r}}\widetilde{\bm{r}}^T + g_{s}^2 \widetilde{\bm{s}}\widetilde{\bm{s}}^T\right) \bm{\mathcal{P}},
\end{equation}
where $\bm{r} = \bm{v} + \bm{v}'$  and $\bm{s} = \bm{u} \times \bm{r}$.
%(or equivalently $\bm{r} = \bm{v} + \bm{v}'$ and  $\bm{s} = \bm{v} \times \bm{v}'$)
We denote $\widetilde{\bm{u}} = \bm{u}/|\bm{u}|$, $\widetilde{\bm{r}} = \bm{\mathcal{P}}\bm{r}/|\bm{\mathcal{P}}\bm{r}|$ and $\widetilde{\bm{s}} = \bm{s}/|\bm{s}|$ as orthonormal vectors.
% represent two directions orthogonal to $\bm u$; $\widetilde{\bm{r}} = \bm{r}/|\bm{r}|$ and $\widetilde{\bm{s}} = \bm{s}/|\bm{s}|$ are unit vectors.
$g_{r} = g_{r}(u,s,t)$ is a rotational invariant scalar function that depends on the magnitude of particle velocities where $u = |\bm{u}|$, $r=|\bm{r}|$ and $s = |\bm{s}|$, and a similar form is applied to $g_{s}(\cdot)$. For systems with non-zero mean velocity, we define $\bm r = \bm v + \bm v' - 2\bar{\bm v}$.
We can show that model \eqref{eq:NN3} strictly satisfies Eq. \eqref{eq:app_conditions}; see Appendix \ref{sec:ConstructCol} for details. 

Compared with the Landau form $\bm{\omega} \propto u^{-1}\bm{\mathcal{P}}$, the new model \eqref{eq:NN3} retains a heterogeneous representation of the local energy transfer process in the plane orthogonal to $\bm u$. Specifically, the term $\widetilde{\bm{r}} \widetilde{\bm{r}}^T$ further relies on the average velocity of the pair of collision particles $\left(\bm{v} + \bm{v}'\right)/2$ and $g_{r} \neq g_{s}$ in general. 
Physically, this enables us to account for the broadly overlooked collective interactions between the pair of particles and the surrounding environment that generally lead to inhomogeneous collisional interactions for plasma with a non-negligible correlation; we postpone the detailed discussion to Fig. \ref{fig:fig1}.   
%where the assumption of independent binary collisions generally shows limitation.
%
This inhomogeneous nature is somewhat similar to the state-dependent effect of the memory term \cite{Lei2010, hijon2010mori, Lyu_Lei_PRL_2023, Ge_Lei_GLE_PRL_2024} that governs the non-equilibrium dynamics of stochastic reduced models. One important difference is that the memory function characterizes the heterogeneous energy dissipation process in the coordinate space, while the present collision kernel characterizes the anisotropic energy transfer process that conserves the total energy in the velocity space.  As shown later, faithfully capturing this inhomogeneous nature is crucial for accurate prediction of the plasma kinetics in the weak coupling regime. In the remainder of this work, we denote the present collision model \eqref{eq:NN3} as CM2 where ``2'' refers to the two subspaces spanned by $\widetilde{\bm{r}}$ and $\widetilde{\bm{s}}$. As a special case, we also consider the collision model with a homogeneous kernel $\bm\omega = g_{I}^{2}(u) \bm{\mathcal{P}}$ which will be denoted by CM1 and can be viewed as a generalization of the Landau model (i.e.,  by choosing $g_I^2(u) \propto 1/u$). 

% In particular, this which generally leads to inhomogeneous collisional interactions

% For plasma with non-negligible correlation,  the assumption of independent binary collisions generally shows limitation; the present form enables us to account for the broadly overlooked collective interactions between the pair of particles and the surrounding environment.  

To learn the collision model \eqref{eq:NN3}, we construct $g_{r}(\cdot)$ and $g_{s}(\cdot)$ by matching the time evolution of the particle velocity PDF $f(\bm v, t)$ predicted from the kinetic model \eqref{eq:collision1} and the full micro-scale MD simulation. While we use the neural network in this study, other forms such as the kernel representation can be also used. To train the model, one essential challenge is the accurate estimation of the 3D velocity PDF $f_{\rm MD}(\bm v, t)$ from the empirical distribution $f^{\ast} (\bm v) = 1/N\sum_{i=1}^N \delta(\bm v- \bm v_i)$, where $\bm v_i$ is the velocity of the individual MD particles.  To circumvent this difficulty, we construct the empirical loss by transferring  the direct evaluation of the density evolution into a weak form, i.e., 
\begin{equation}
L=  \sum_{m=1}^{N_m} \sum_{l=1}^{N_t} \sum_{k=1}^{N_\psi}  \left(\left.\dfrac{\partial f^{(l,m)}}{\partial t}\right|_{\rm MD}-\left.\dfrac{\partial f^{(l,m)}}{\partial t}\right|_{\rm c}, \psi_k\right)^2, 
\label{eq:loss_weak_form}
\end{equation}
where $(f, \psi) := \int f(\bm v)\psi(\bm v)\diff \bm v \approx \frac{1}{N} \sum_{i=1}^N \psi(\bm v_i)$ represents an inner product evaluated by the empirical distribution of the MD particles and $\left\{\psi_k(\bm v)\right\}_{k=1}^{N_\psi}$ is a set of test bases. 
$f^{(l,m)} = f^{(m)}(\bm v, t_l)$ represents the density at $t=t_{l}$ and $m$ is the index representing the MD simulations starting with various initial distributions. In this work,  we choose $N_m = 3$ and collect the training samples by setting the initial PDF  $f(\bm v, 0)$ following the uniform, bi-Maxwellian, and double-well distributions; see Appendix \ref{sec:TrainingDetails} for the detailed form of $f(\bm v, 0)$ and $\psi(\bm v)$. We emphasize that the constructed model will be validated by the kinetic processes different from the ones used for training. 

The MD density evolution in the weak form \eqref{eq:loss_weak_form} can be evaluated by 
\begin{equation}
\left(\left.\dfrac{\partial f^{(l)}}{\partial t}\right|_{\rm MD}, \psi_k\right) = \dfrac{1}{N\delta t}\sum_{i=1}^N (\psi_k(\bm{v}_{i}^{l+1})-\psi_k(\bm{v}_{i}^{l})),
\label{eq:loss_MD_weak_form}
\end{equation}
where index $m$ is skipped for simplicity and $\delta t$ is the time step of the training sample. The summation can be efficiently pre-computed over the MD particles.  On the other hand, the weak form of the kinetic density evolution, with the integral by parts, takes the form
\begin{equation}
\begin{split}
&\left(\left.\dfrac{\partial f}{\partial t}\right|_{c},\psi_k\right) =
        \left(f(\bm{v})\int \bm{\omega}_{ij} f(\bm{v}')\mathrm{d}\bm{v}', \dfrac{\partial \psi_k(\bm{v})}{\partial v_i \partial v_j}\right) \\
        &\quad+ \left( f(\bm{v}) \int \left[\dfrac{\partial \bm{\omega}_{ij}}{\partial \bm{v}_j} - \dfrac{\partial \bm{\omega}_{ij}}{\partial \bm{v}_j '}\right] f(\bm{v}') \mathrm{d}\bm{v}', \dfrac{\partial \psi_k (\bm{v})}{\partial \bm{v}_i} \right),
\end{split}    
\end{equation}
where indices $i$ and $j$ follow the Einstein summation. In particular, we note that the integration with respect to $\bm v$ and $\bm v'$ leads to a double summation over the MD particles with $\mathcal{O}(N^2)$ complexity, which becomes computationally intractable for the common MD simulation (e.g., $N\sim \mathcal{O}({10}^6)$). To overcome the limitation, we use the random mini-batch approach \cite{ketkar2017stochastic, jin2021random} as an unbiased stochastic approximation, i.e.,
\begin{equation}
    \begin{split}
        &\left(\left.\dfrac{\partial f}{\partial t}\right|_{c},\psi_k\right) \approx \dfrac{1}{P}\sum_{p=1}^{P} \bm{\omega}_{ij}(\bm{v}_{n(p)},\bm{v}_{n'(p)} ') \dfrac{\partial^2 \psi_k (\bm{v}_{n(p)})}{\partial \bm{v}_i \partial \bm{v}_j} \\
        &~+ \left[ \dfrac{\partial \bm{\omega}_{ij}}{\partial \bm{v}_j} - \dfrac{\partial \bm{\omega}_{ij}}{\partial \bm{v}_j '} \right](\bm{v}_{n(p)},\bm{v}_{n'(p)} ') \dfrac{\partial\psi_k (\bm{v}_{n(p)})}{\partial \bm{v}_i}, 
  \end{split}
  \label{eq:loss_kinetic_weak_form}
\end{equation}
where $\left\{n(p), n'(p)\right\}_{p=1}^P$ represents a set of pairs of indices randomly chosen from the full MD samples with $1\le n(p), n'(p)\le N$ and $n(p)\neq n'(p)$ for each training step. In this work, we choose $P=10^5$, which enables us to establish efficient training of generalized collision kernel $\bm\omega$ by minimizing empirical loss function in form of Eqs. \eqref{eq:loss_weak_form} \eqref{eq:loss_MD_weak_form} \eqref{eq:loss_kinetic_weak_form}; see Appendix \ref{sec:TrainingDetails} for training details.

\section{Numerical results}
The present collision operator enables us to investigate the kinetic processes of plasma systems in a broader physical regime compared with the conventional model.
In this work, we study a monovalent cation OCP system consisting of $10^6$ charged particles in a $10^2\times10^2\times10^2$\AA$^3$ domain with a periodic boundary condition imposed on each direction.
The MD simulations are conducted under equilibrium configurations with several initial velocity distributions (see Appendix \ref{sec:MDSet}). The constructed collision operators will be examined by kinetic processes different from the training ones. 
For validation purposes, we first consider the plasma under high thermal energy $k_BT=100$ eV, where the corresponding plasma coupling parameter $\Gamma = \frac{q_e^2}{4\pi \epsilon_0 k_B T}\left(4\pi n/3\right)^{1/3}$ is $0.23$ and the assumption of small-angle scattering remains valid. As expected, the predictions from the kinetic equation with both the present and the Landau operator show good agreement with the full MD simulation results 
(see Appendix \ref{sec:HighTemp}). In the remainder of this work, we focus on the more challenging lower thermal energy regime with $k_BT= 10$ eV, where the Landau collision operator generally shows limitations for $\Gamma \sim \mathcal{O}(1)$.

\begin{figure}[h]
    \centering
    \includegraphics[width=0.8\textwidth]{./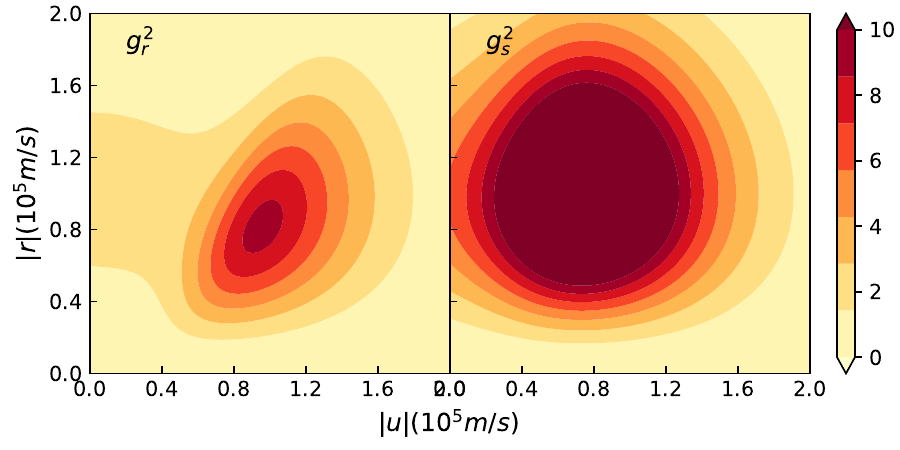}
    \caption{The energy transfer magnitude functions $\mathbb{E}_{s}[g_{r}^{2}(u,r,s)]$ and $\mathbb{E}_{s}[g_{s}^{2}(u,r,s)]$ of the present collision model CM2 in Eq. \eqref{eq:NN3}. The ensemble average is taken over the component $s$ under the equilibrium distribution. Unlike the Landau model, $g_{r}^2 < g_{s}^2$ implies the anisotropic nature of the energy transfer arising from the collective interaction between the pair of particles and the environments.}
    \label{fig:fig1}
\end{figure}

Let us start with the generalized collision kernel $\bm\omega(\bm v, \bm v')$ in the form of Eq. \eqref{eq:NN3}. Fig. \ref{fig:fig1} shows the projection of the magnitude functions $\mathbb{E}_{s}[g_{r}^2(u, r, s)]$ and $\mathbb{E}_{s}[g_{s}^2(u, r, s)]$ constructed from the MD samples, where the ensemble average is taken over the component $s$ under the equilibrium distribution. In contrast to the isotropic form (i.e., $g_{r}=g_{s}$) of the Landau operator, the magnitude of the energy transfer $g_{r}^2$ along the direction $\widetilde{\bm{r}} \propto \bm{\mathcal{P}}\bm{r}$ is generally smaller than $g_{s}^2$ along the direction $\widetilde{\bm{s}} \propto \bm{u} \times \bm{r}$ (see Appendix \ref{sec:LowTemp} for additional results). This anisotropic nature can be understood as below. In the high-temperature chaotic regime (i.e., $\Gamma \ll 1$), the energy transfer is dominated by the binary collision of two particles with relative velocity $\bm u = \bm v- \bm v'$ and is isotropic in the plane orthogonal to $\bm u$, i.e., $g_{r}^2 = g_{s}^2 \equiv g_{u,\perp}^2$.
As temperature decreases (i.e., $\Gamma \sim \mathcal{O}(1)$), the particle correlation becomes non-negligible and the collective interactions between the pair of particles and the environment begin to play a role. Accordingly, there exists a second energy transfer from the collective motion $\bm r \propto (\bm v + \bm v')/2$ to the orthogonal plane.
However, due to the energy conservation, this energy transfer is restricted to the null space of $\bm u$ (i.e., the projection by $\bm{\mathcal{P}} = \bm{I} - \bm{u}\bm{u}^T/\vert \bm{u}\vert ^{2}$).
As a result, this collective interaction will lead to a net energy transfer from $\widetilde{\bm{r}} \propto \bm{\mathcal{P}}\bm{r}$ to $\widetilde{\bm{s}} \propto \bm{u} \times \bm{r}$, i.e., $g_{r}^2 = g_{u,\perp}^2 - \delta g_{r, \perp}^2$ and $g_{s}^2 = g_{u,\perp}^2 + \delta g_{r, \perp}^2$.
While the theoretical derivation of the analytical form for this second energy transfer $\delta g_{r, \perp}^2$ is highly non-trivial, the present data-driven approach provides a faithful way to accurately capture this broadly overlooked effect.

\begin{figure}[h]
    \centering
    \includegraphics[width=0.8\textwidth]{./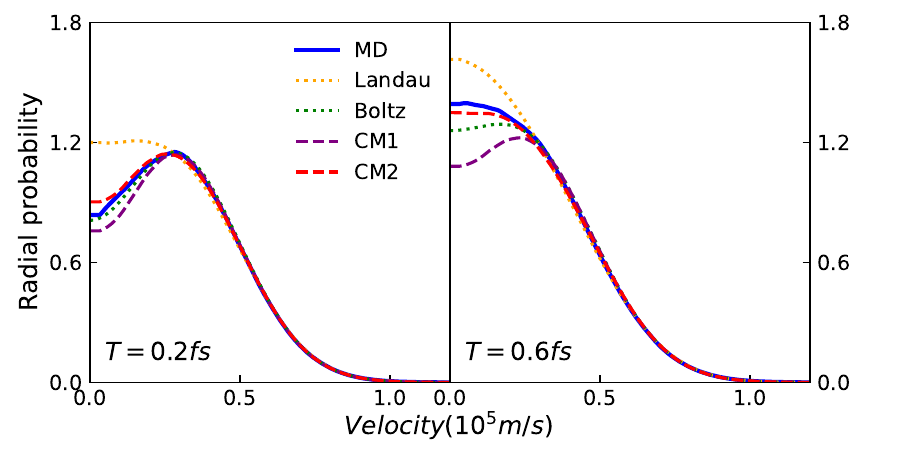}
    \caption{The instantaneous distribution of the radial velocity magnitude with the BKW model as the initial condition predicted by the full MD simulations and the kinetic equation with the Landau, Boltzmann, and the present CM1 and CM2 collision models at (a) $t=0.2$ fs and (b) $t=0.6$ fs.}
    \label{fig:fig2}
\end{figure}

To probe the anisotropic effects on plasma kinetics, we study the dynamical processes with various initial distributions. We emphasize that these processes are not included in the training set of the present collision model. First, we examine the relaxation from the initial velocity PDF $f(\bm v, t=0)$ following the Bobylev-Krook-Wu (BKW) model \cite{krook1977exact, bobylev1975exact}.  
%where the exact solution of the Boltzmann equation is known analytically for $t>0$. 
We note that the exact BKW solution of the Boltzmann equation corresponds to the Maxwellian molecule rather than the Coulomb interactions and we should not expect the solution to agree with the MD results. It is merely used as a benchmark problem to generate the non-equilibrium initial condition for comparative study. 
Fig. \ref{fig:fig2} shows the predictions of the distribution of the radial velocity magnitude. For a short time at $t=0.2$ fs, the prediction of the kinetic equation with both the present (CM1 and CM2) %and the Boltzmann collision operator 
shows good agreement with the full MD results. The larger discrepancy of the Landau model indicates the limitation of its energy transfer magnitude function ($\propto u^{-1}$). Furthermore, for a longer time $t= 0.6$ fs, the predictions from the kinetic equation with both the empirical and the present CM1 collision model show apparent deviations from the MD results. In contrast, the present CM2 model yields good agreement, implying the crucial role of the anisotropic nature of the energy transfer process.

\begin{figure}[h]
    \centering
    \includegraphics[width=0.8\textwidth]{./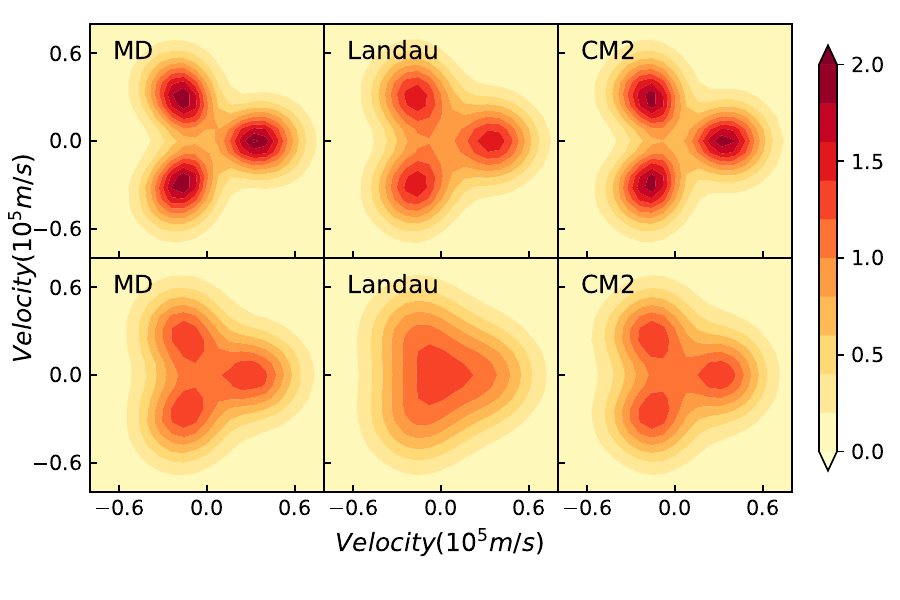}
    \caption{The instantaneous velocity PDF in the $v_{1}\mhyphen v_{2}$ plane from a trimodal initial distribution predicted by the full MD, the Landau and the present CM2 collision model at $t = 0.2~\text{fs}$ (upper) and $0.6~\text{fs}$ (lower).}
    \label{fig:fig3}
\end{figure}

Next, we investigate the kinetic process with the initial velocity PDF taking a trimodal distribution, where the particle velocity is
equally concentrated at three positions in the $v_1\mhyphen v_2$ plane, and is Maxwellian in the $v_3$ direction. As shown in Fig. \ref{fig:fig3},  the prediction of the Landau model overestimates the relaxation process towards the equilibrium distribution and shows apparent deviations from the MD results.  This overestimation is likely due to the aforementioned anisotropic energy transfer effect. In particular, the collective motion is dominated in the $v_1\mhyphen v_2$ plane that results in larger energy transfer in the $v_3$ direction. On the other hand, the Landau model assumes the isotropic form and therefore leads to overestimation of the relaxation process. Furthermore, we emphasize that this discrepancy can not be simply remedied by scaling the Landau model, which, however, leads to larger discrepancies for other kinetic processes (see SM \ref{sec:LowTemp} for additional results). Fortunately, this complex anisotropic effect can be faithfully encoded in the present collision model, which yields good agreement with the full MD results.

Finally, we examine the kinetic process from a 3D double-well initial velocity distribution which is asymmetric in the $v_1\mhyphen v_2$ plane. Similar to the previous trimodal distribution, the Landau model overestimates the relaxation process, as shown in Fig. \ref{fig:fig4}. On the other hand, the present CM2 collision model accurately reproduces the full MD results. 

\begin{figure}[h]
    \centering
    \includegraphics[width=0.8\textwidth]{./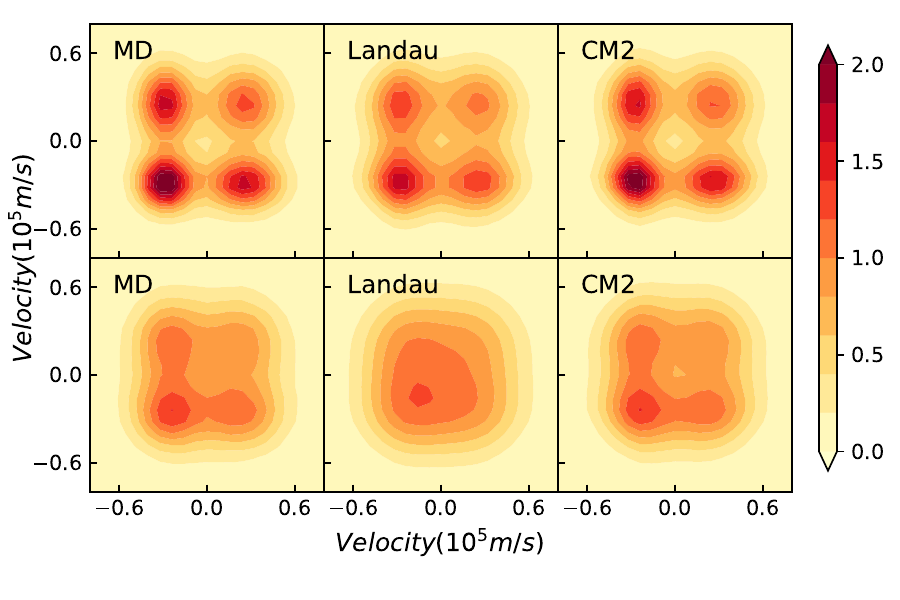}
    \caption{The instantaneous velocity PDF in the $v_{1}\mhyphen v_{2}$ plane from an asymmetric double-well distribution predicted by the full MD, the Landau and the present CM2 collision model at $t = 0.2~\text{fs}$ (upper) and $0.6~\text{fs}$ (lower).}
    \label{fig:fig4}
\end{figure}

\section{Summary}
In summary, this work presents a data-driven approach to learning a generalized collision operator that strictly preserves the physics and symmetry constraints directly from the micro-scale MD models. The constructed model reveals the anisotropic nature of the collisional energy transfer arising from the collective interactions between the pair of particles and the environment. This anisotropic effect has been broadly overlooked in common empirical collision models such as the Landau form but proves to be crucial for plasma with non-negligible correlations. In particular, the generalized collisional model significantly broadens the applicability of the collisional kinetic description for the weak coupling regime ($\Gamma \sim \mathcal{O}(1)$) where the empirical forms show limitations. 
Essential ideas of existing efficient numerical methods \cite{Carrillo_JCP_2020,caflisch2008hybrid,caflisch2016accelerated,taitano2021conservative} for the Landau collisional kinetic model can be naturally applied to the present model. 
Furthermore, the present random-batch-based weak formulation essentially provides a general framework for efficient learning of spatially inhomogeneous and multi-species meso-scale kinetic models from micro-scale MD descriptions governed by various particle interactions, 
%and can be potentially used to learn other collision operators such as the Boltzmann equation.   
%
and paves the way towards plasma kinetics 
%for applications with spatial inhomogeneity, multi-species, and stronger correlation ($\Gamma \sim \mathcal{O}(10)$). We leave this for future study. 
%for applications where the effects of spatial inhomogeneity, multi-species, and stronger correlation ($\Gamma \sim \mathcal{O}(10)$) need to be considered. We leave this for future study. 
%and paves the way towards the plasma kinetics 
for %spatially inhomogeneous and multi-species 
systems with stronger correlations. 
%($\Gamma \sim \mathcal{O}(10)$) where the effects of the many-body and non-Markovian interactions need to be properly introduced. 
We leave this for future study.

\begin{acknowledgements}
We acknowledge helpful discussions from Philip J. Morrison. The work is supported in part by the National Science Foundation under Grant DMS-2110981, the Department of Energy under Grant No. DOE-DESC0023164 and the ACCESS program through allocation MTH210005.
\end{acknowledgements}

\appendix

\section{Proposition and construction of the generalized collision operator}\label{sec:ConstructCol}

In this work, we start with the kinetic dynamics of spatially homogeneous one-component plasma (OCP) in the metriplectic form \cite{morrison1986paradigm}, where 
%the potential energy and kinetic energy of the system are weakly coupled or decoupled.
the evolution of the velocity probability density function (PDF) $f(\bm v,t)$ satisfies
\begin{equation}\label{eq:collision}
    \dfrac{\partial f(\bm{v},t)}{\partial t} = \nabla \cdot \int \bm{\omega} \left[ f(\bm{v}') \nabla f(\bm{v}) - f(\bm{v}) \nabla' f(\bm{v}') \right] \mathrm{d}\bm{v}',
\end{equation}
%We assume the OCP is uniformly distributed in phase space, and the velocities are randomly selected from an initial distribution with zero total momentum.
where $\bm{\omega}(\bm v, \bm v') \in \mathbb{R}^{3\times 3}$ is a symmetric positive semi-definite kernel representing the particle collisional interactions. It is designed to satisfy the following properties: invariance under variable exchange, symmetry under rotation, and zero projection onto $\bm v-\bm v'$, i.e., 
\begin{equation}\label{eq:conditions}
\begin{aligned}
    \bm{\omega}(\mathcal{U}\bm{v},\mathcal{U}\bm{v}') &= \mathcal{U}\bm{\omega}(\bm{v},\bm{v}')\mathcal{U}^T \\
    \bm{\omega}(\bm{v},\bm{v}') &= \bm{\omega}(\bm{v}',\bm{v}) \\
    \bm{\omega}(\bm{v},\bm{v}')(\bm{v} -\bm{v}') &= \bm{0},
\end{aligned}
\end{equation}
where $\bm{\mathcal{U}}$ is a unitary matrix.

\begin{proposition}\label{prop:property}
With the collision operator satisfying Eq. \eqref{eq:conditions},
the kinetic model \eqref{eq:collision},  strictly conserves the mass, momentum, and energy, and preserves the frame indifference constraints.
Furthermore, it ensures non-negative solution and entropy production,
and admits the Maxwellian distribution as the equilibrium state.
\end{proposition}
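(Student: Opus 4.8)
The plan is to verify each conservation law and structural property by testing the right-hand side of Eq.~\eqref{eq:collision} against the appropriate moment and exploiting the three algebraic conditions in Eq.~\eqref{eq:conditions} together with the symmetric positive semi-definiteness of $\bm\omega$. I will organize the argument around the ``weak form'' of the operator: for any smooth test function $\psi(\bm v)$, integrate by parts to obtain
\begin{equation*}
\int \psi(\bm v)\,\partial_t f\,\diff\bm v = -\int\!\!\int \nabla\psi(\bm v)\cdot\bm\omega(\bm v,\bm v')\bigl[f(\bm v')\nabla f(\bm v) - f(\bm v)\nabla' f(\bm v')\bigr]\diff\bm v'\,\diff\bm v,
\end{equation*}
and then symmetrize in $(\bm v,\bm v')$ using the exchange symmetry $\bm\omega(\bm v,\bm v') = \bm\omega(\bm v',\bm v)$. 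The standard symmetrization gives
\begin{equation*}
\int \psi\,\partial_t f\,\diff\bm v = -\tfrac12\int\!\!\int \bigl[\nabla\psi(\bm v) - \nabla'\psi(\bm v')\bigr]\cdot\bm\omega\,\bigl[f(\bm v')\nabla f(\bm v) - f(\bm v)\nabla' f(\bm v')\bigr]\diff\bm v'\,\diff\bm v.
\end{equation*}
From this master identity every claim follows by a suitable choice of $\psi$.

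\textbf{Conservation laws and frame indifference.} Mass is the choice $\psi = 1$, which kills the right-hand side outright. Momentum is $\psi = v_a$ for each component $a$: then $\nabla\psi(\bm v) - \nabla'\psi(\bm v') = \bm e_a - \bm e_a = 0$, so the integrand vanishes identically. Energy is $\psi = |\bm v|^2$: here $\nabla\psi(\bm v) - \nabla'\psi(\bm v') = 2\bm v - 2\bm v' = 2\bm u$, and the third condition $\bm\omega(\bm v,\bm v')\,\bm u = \bm 0$ (equivalently $\bm u^T\bm\omega = \bm 0$ by symmetry) annihilates the integrand. Frame indifference is the rotational covariance: substituting $\bm v\mapsto\mathcal U\bm v$ in Eq.~\eqref{eq:collision} and using $\bm\omega(\mathcal U\bm v,\mathcal U\bm v') = \mathcal U\bm\omega(\bm v,\bm v')\mathcal U^T$ together with $\nabla_{\mathcal U\bm v} = \mathcal U\nabla_{\bm v}$ and $\nabla\cdot(\mathcal U\,\cdot) = \mathcal U^T\cdot\nabla(\cdot)$ shows the equation transforms consistently, i.e.\ a rotated solution solves the rotated problem.

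\textbf{Entropy production, equilibrium, and positivity.} For the $H$-theorem take $\psi = -\log f - 1$ (entropy $S = -\int f\log f$); then $\nabla\psi(\bm v) = -\nabla f(\bm v)/f(\bm v)$ and the bracket $f(\bm v')\nabla f(\bm v) - f(\bm v)\nabla' f(\bm v') = f(\bm v)f(\bm v')\,[\nabla\log f(\bm v) - \nabla'\log f(\bm v')]$ matches the test-function difference up to the positive factor $f(\bm v)f(\bm v')$, giving
\begin{equation*}
\dot S = \tfrac12\int\!\!\int f(\bm v)f(\bm v')\,\bm\xi^T\bm\omega\,\bm\xi\,\diff\bm v'\,\diff\bm v \ge 0,\qquad \bm\xi := \nabla\log f(\bm v) - \nabla'\log f(\bm v'),
\end{equation*}
which is non-negative precisely because $\bm\omega$ is positive semi-definite. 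Equality forces $\bm\omega\bm\xi = 0$ a.e.; since the null space of $\bm\omega$ is (at least) $\mathrm{span}\{\bm u\}$, this means $\bm\xi \parallel \bm u$, i.e.\ $\nabla\log f(\bm v) - \nabla'\log f(\bm v') = \lambda(\bm v,\bm v')\,(\bm v - \bm v')$. The usual separation-of-variables argument on this functional equation yields $\nabla\log f(\bm v) = \bm a + B\bm v$ with $B$ symmetric, hence (integrating, using normalizability and the zero-mean assumption) $f$ is Maxwellian; conversely a Maxwellian makes $\bm\xi = B(\bm v - \bm v')\parallel\bm u$ only if $B\propto I$, which one checks is consistent with it being a stationary state by direct substitution. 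Positivity (non-negativity) of $f$ along the flow is the standard parabolic/maximum-principle argument: the operator is degenerate-parabolic in $\bm v$ with the transport-type lower-order terms vanishing on $\{f=0\}$ when $f\ge 0$ is touched from above, so $f(\cdot,0)\ge 0$ propagates; I would state this and point to the structure rather than prove a full well-posedness theorem.

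\textbf{Main obstacle.} The routine parts are the moment identities — they are three-line computations once the symmetrized weak form is in hand. The genuinely delicate step is the equilibrium characterization: going from ``$\bm\xi\parallel\bm u$ for all pairs $(\bm v,\bm v')$'' to ``$f$ is Maxwellian'' requires care because the null space of $\bm\omega$ could a priori be larger than $\mathrm{span}\{\bm u\}$ (the kernel \eqref{eq:NN3} has rank at most $2$, so its null space is exactly two-dimensional, containing $\bm u$ and generically one more direction), and one must rule out spurious equilibria supported on that extra direction. I expect to handle this by noting that the extra null direction of $\bm\omega$ in \eqref{eq:NN3} varies with $(\bm v,\bm v')$ while the constraint $\bm\xi\parallel\bm u$ must hold simultaneously for all pairs, which over-determines $\nabla\log f$ and forces the Maxwellian form; making that argument fully rigorous (rather than generic) is the part that needs the most attention. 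Positivity is a close second, since a careful statement really needs a well-posedness framework that the paper does not develop, so I would present it at the level of the formal maximum-principle structure.
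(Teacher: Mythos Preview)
Your symmetrized weak-form argument for mass, momentum, energy, and the $H$-theorem is exactly what the paper does, line for line. The two places where you and the paper diverge are positivity and the equilibrium characterization.

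For positivity, the paper gives a short concrete calculation rather than invoking a maximum principle: at a first zero $(\bm v_0,t)$ of $f$ one has $\nabla f(\bm v_0,t)=\bm 0$ (interior minimum), so the collision operator collapses to $\partial_t f = \nabla^2 f : \bigl(\int \bm\omega f'\,\diff\bm v'\bigr)\ge 0$ because the Hessian at a minimum is positive semi-definite and the bracketed matrix is PSD by hypothesis. This is the same structural idea as your ``degenerate-parabolic'' remark but packaged as a two-line computation you could simply adopt.

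For the equilibrium, the paper sidesteps your ``main obstacle'' entirely: instead of analyzing the equality case $\bm\omega\bm\xi=0$ of the $H$-theorem, it maximizes $S(f)=-\int f\log f$ under the mass/momentum/energy constraints via Lagrange multipliers and reads off the Maxwellian directly. This buys simplicity and avoids any null-space discussion, at the cost of being a weaker statement (it shows the Maxwellian is the max-entropy state, hence an equilibrium, but does not prove it is the \emph{only} stationary solution of $\mathcal C[f]=0$). Your route, if carried through, would give uniqueness as well; note however that your parenthetical about the kernel \eqref{eq:NN3} is off by one: a rank-$2$ matrix in $\mathbb R^{3\times3}$ has a one-dimensional null space, namely $\mathrm{span}\{\bm u\}$, so generically there is no ``extra null direction'' to worry about and the obstacle you flag largely dissolves.
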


\begin{proof}
Without loss of generality, we assume the particle mass is unit. The total density, momentum and kinetic energy are defined as
    \begin{equation}\label{eq:PhysProp}
        n = \int f \mathrm{d} \bm{v}, ~ \bm{p} = \int \bm{v}f \mathrm{d} \bm{v}, ~ K = \frac{1}{2}\int \bm{v}^2 f \mathrm{d} \bm{v}.
    \end{equation}
    For any physical quantity $\phi$, we have
    \begin{equation}\label{eq:ConservationLaw}
        \begin{aligned}
            \dfrac{\mathrm{d}}{\mathrm{d} t} \int \phi(\bm{v}) f \mathrm{d} \bm{v} &= \int \phi(\bm{v}) \left(\nabla \cdot \int \bm{\omega} \left[ f(\bm{v}') \nabla f(\bm{v}) - f(\bm{v}) \nabla' f(\bm{v}') \right] \mathrm{d}\bm{v}'\right) \mathrm{d}\bm{v}\\
            &= -\dfrac{1}{2}\iint \left(\nabla_{v} \phi(\bm{v}) - \nabla_{v'} \phi(\bm{v}')\right) \bm{\omega}(\bm{v},\bm{v}') \left[ f(\bm{v}') \nabla f(\bm{v}) - f(\bm{v}) \nabla' f(\bm{v}') \right] \mathrm{d}\bm{v}' \mathrm{d}\bm{v}.
        \end{aligned}
    \end{equation}
    It is easy to show the quantities $(n,\bm{p},K)$ are conserved.
    The entropy of the system and its evolution are given by 
    \begin{equation}\label{eq:Entropy}
        \begin{aligned}
            S(f)&=-\int f \log f \mathrm{d}\bm{v},\\
            \dfrac{\mathrm{d}S}{\mathrm{d}t} = \dfrac{1}{2}\iint & B \bm{\omega}(\bm{v},\bm{v}') B f(\bm{v})f(\bm{v}') \mathrm{d}\bm{v}' \mathrm{d}\bm{v} \geq 0,
        \end{aligned}
    \end{equation}
where $B = \nabla_{v} \log f(\bm{v}) - \nabla_{v'} \log f(\bm{v}')$ and we have used the fact that $\bm\omega \succeq 0$.  

Furthermore, the evolution of the PDF is positive-preserving based on the proposed collision operator. At the point where the PDF first reaches zero, i.e., $f(\bm{v}_{0},t)=0$, the gradient of $f$ also vanishes, such that $\nabla f(\bm{v}_{0},t) = \bm{0}$. The time evolution equation satisfies
    \begin{equation}
        \frac{\partial f(\bm{v}_{0},t)}{ \partial t}  = \nabla^2 f : \left(\int \bm{\omega}f'\mathrm{d}\bm{v}'\right) \geq 0,
    \end{equation}
since $\nabla^2 f$ is a non-negative definite matrix and $\int \bm{\omega}f'\mathrm{d}\bm{v}'$ is a symmetric positive semi-definite matrix.
    
Finally, we show that the equilibrium distribution corresponds to the Maxwellian distribution. By applying the Lagrangian multiplier method, we have 
\begin{equation}
        L = \int f\ln f \mathrm{d}\bm{v} + \lambda_{1} \left(\int f \mathrm{d}\bm{v} - 1\right) + \lambda_{2} \left(\int \bm{v}f \mathrm{d}\bm{v}\right) + \lambda_{3} \left(\int \dfrac{1}{2}\bm{v}^{2} f \mathrm{d}\bm{v} - \dfrac{3}{2} k_{B} T\right),
\end{equation}
where $\lambda_{1}$, $\lambda_{2}$ and $\lambda_{3}$ are Lagrangian multipliers. When $L$ reaches its minimum value (maximum entropy), we have 
    \begin{equation}
        \dfrac{\partial L}{\partial \lambda_{1}}=0,~~~
        \dfrac{\partial L}{\partial \lambda_{2}}=0,~~~
        \dfrac{\partial L}{\partial \lambda_{2}}=0,~~~
        \dfrac{\delta L}{\delta f}=0,
    \end{equation}
where the equilibrium state admits the Maxwellian distribution $f_{eq}\propto \exp(-\vert \bm{v}\vert^{2}/2k_{B}T)$.
\end{proof}

Furthermore, we show that the generalized collision kernel needs to satisfy the following constraint.
\begin{proposition}\label{prop:symmetry}
The collision kernel needs to satisfy the symmetry condition $\bm{\omega}(\bm{v},\bm{v}') = \bm{\omega}(-\bm{v},-\bm{v}')$.
\end{proposition}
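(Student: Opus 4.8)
The plan is to show that the constraint $\bm{\omega}(\bm{v},\bm{v}') = \bm{\omega}(-\bm{v},-\bm{v}')$ is a direct consequence of the rotational covariance already imposed in the first line of Eq. \eqref{eq:conditions}, by exhibiting a single orthogonal transformation that implements the sign flip. First I would observe that the map $\bm{v}\mapsto -\bm{v}$ in $\mathbb{R}^3$ is \emph{not} achievable by a rotation (it has determinant $-1$), so a naive choice $\mathcal{U}=-\bm{I}$ is disallowed if $\mathcal{U}$ is meant to be a proper rotation; however, a rotation by angle $\pi$ about any fixed axis $\bh{n}$ sends $\bm{v}\mapsto -\bm{v}+2(\bh{n}\cdot\bm{v})\bh{n}$, which is not quite the full inversion either. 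The clean route is instead to note that the physically relevant invariance group here must include reflections (frame indifference under the full orthogonal group $O(3)$, not just $SO(3)$), since the collision kernel and the operator \eqref{eq:collision} are manifestly invariant under parity; under that reading, taking $\mathcal{U}=-\bm{I}$ (a unitary, indeed orthogonal, matrix) in the first relation of Eq. \eqref{eq:conditions} gives immediately
\begin{equation}
\bm{\omega}(-\bm{v},-\bm{v}') = (-\bm{I})\,\bm{\omega}(\bm{v},\bm{v}')\,(-\bm{I})^T = \bm{\omega}(\bm{v},\bm{v}'),
\end{equation}
which is exactly the claimed identity.

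Alternatively, if one wants to stay strictly within proper rotations, I would argue in two steps: combine a rotation by $\pi$ about the $v_3$-axis, $\mathcal{U}_3 = \mathrm{diag}(-1,-1,1)$, which is in $SO(3)$, with the observation that the remaining single sign flip on the $v_3$ component is forced by an independent physical symmetry of the homogeneous OCP — isotropy of the equilibrium ensemble — so that the kernel cannot depend on the sign of any individual velocity component once it is rotationally covariant and symmetric under exchange. Concretely, $\mathcal{U}_3$ already yields $\bm{\omega}(\mathcal{U}_3\bm v,\mathcal{U}_3\bm v') = \mathcal{U}_3\bm{\omega}\mathcal{U}_3^T$, and one then checks that the explicit building blocks of the kernel — the vectors $\bm u=\bm v-\bm v'$, $\bm r = \bm v+\bm v'$ (or $\bm v+\bm v'-2\bar{\bm v}$), $\bm s=\bm u\times\bm r$, and the projector $\bm{\mathcal P}$ — transform consistently, so the same covariance holds for $\mathcal U=-\bm I$.

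The key steps in order are: (i) state precisely which group the frame-indifference relation in Eq. \eqref{eq:conditions} is meant to range over, and argue that $O(3)$ (including parity) is the physically correct choice for a homogeneous, unmagnetized OCP; (ii) insert $\mathcal{U}=-\bm{I}$ into the first relation of Eq. \eqref{eq:conditions}; (iii) simplify $(-\bm I)\bm\omega(-\bm I)^T = \bm\omega$ to conclude. I expect the main obstacle to be purely expository rather than technical: one must justify that the inversion $\bm v\mapsto-\bm v$ is an admissible symmetry operation, i.e., that the $\mathcal{U}$ appearing in Eq. \eqref{eq:conditions} should be read as an arbitrary orthogonal (not merely rotation) matrix — equivalently, that the learned kernel, and the MD data it is fit to, are parity-symmetric. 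Once that point is granted the proof is a one-line substitution; the real content of the proposition is the modeling statement that this extra discrete symmetry must be enforced on the neural-network parametrization \eqref{eq:NN3}, which is automatically satisfied there because $\bm u$, $\bm r$ and $\bm s$ are all odd under $(\bm v,\bm v')\mapsto(-\bm v,-\bm v')$ while the kernel is built from even combinations $\widetilde{\bm r}\widetilde{\bm r}^T$, $\widetilde{\bm s}\widetilde{\bm s}^T$ and the even scalars $u,r,s$.
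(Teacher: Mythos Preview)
Your argument is correct, but it proceeds along a genuinely different line from the paper's. You derive the parity constraint \emph{algebraically} by inserting $\mathcal{U}=-\bm{I}$ into the covariance relation of Eq.~\eqref{eq:conditions}, after arguing that the admissible symmetry group should be the full $O(3)$ rather than $SO(3)$. The paper instead derives it \emph{dynamically}: it posits that if $f(\bm v,t)$ solves the kinetic equation \eqref{eq:collision} then so must $h(\bm v,t):=f(-\bm v,t)$, computes $\partial_t h$ in two ways (once directly, once via the change of variables $\bm v'\to-\bm v'$ in the collision integral for $f$), and reads off $\bm\omega(\bm v,\bm v')=\bm\omega(-\bm v,-\bm v')$ as the consistency condition.

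The practical difference is this: in the paper's reading, Eq.~\eqref{eq:conditions} ranges only over proper rotations (this is why the authors later say the cross term $\bm\omega_3$ in Eq.~\eqref{eq:CM3} \emph{does} satisfy \eqref{eq:conditions} yet \emph{fails} Proposition~\ref{prop:symmetry}---which would be impossible under your $O(3)$ reading, since $\bm s=\bm u\times\bm r$ is a pseudovector). Their dynamical argument therefore supplies an \emph{independent} physical reason for the parity constraint and makes Proposition~\ref{prop:symmetry} a genuinely new restriction. Your route is shorter and equally valid once one accepts $O(3)$-covariance, but it effectively absorbs the content of the proposition into a strengthened version of \eqref{eq:conditions}; the paper's route keeps the two separate and explains \emph{why} parity must be enforced even if one had only assumed $SO(3)$-covariance a priori. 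Your closing remark that the ansatz \eqref{eq:NN3} automatically respects the symmetry is correct and matches the paper's use of the proposition.
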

\begin{proof}
Let us consider $h(\bm{v},t_{0})=f(-\bm{v},t_{0})$ at the initial time $t_{0}$, then $h(\bm{v},t)=f(-\bm{v},t)$ should be satisfied at $t\geq t_{0}$.
The evolution of $h(\bm{v},t)$ should satisfy the above general collision equation
    \begin{equation*}
        \dfrac{\partial h}{\partial t}(\bm{v},t) = \nabla \cdot \int \bm{\omega}(\bm{v},\bm{v}') \left[ h(\bm{v}') \nabla h(\bm{v}) - h(\bm{v}) \nabla' h(\bm{v}') \right] \mathrm{d}\bm{v}' .
    \end{equation*}
On the other hand
    \begin{equation*}
        \begin{aligned}
            \dfrac{\partial h}{\partial t}(\bm{v},t) &= \dfrac{\partial f}{\partial t}(-\bm{v},t) \\
            &= -\nabla \cdot \int \bm{\omega}(-\bm{v},\bm{v}') \left[ -f(\bm{v}') \nabla f(-\bm{v}) - f(-\bm{v}) \nabla' f(\bm{v}') \right] \mathrm{d}\bm{v}' \\
            &= -\nabla \cdot \int \bm{\omega}(-\bm{v},-\bm{v}') \left[ -f(-\bm{v}') \nabla f(-\bm{v}) + f(-\bm{v}) \nabla' f(-\bm{v}') \right] \mathrm{d}\bm{v}' \\
            &= \nabla \cdot \int \bm{\omega}(-\bm{v},-\bm{v}') \left[ h(\bm{v}') \nabla h(\bm{v}) - h(\bm{v}) \nabla' h(\bm{v}') \right] \mathrm{d}\bm{v}',
        \end{aligned}
    \end{equation*}
where we insert $-\bm{v}$ into the collision function Eq. \eqref{eq:collision} in the first step, change variable $\bm{v}' \rightarrow -\bm{v}'$ in the second step, and use $h(\bm{v})=f(-\bm{v})$ in the last step.
\end{proof}

\iffalse
\begin{equation}\label{eq:full}
    \begin{aligned}
        \bm{\omega} &= \bm{\mathcal{P}} \bm{A} \bm{A}^T \bm{\mathcal{P}}, \\
        \bm{A} &= \bm{A}_1 + (\bm{u}^T \bm{r})\bm{A}_2, \\
        \bm{A}_1 &= g_{ss} \bm{r}\bm{r}^T + g_{tu} \bm{s}\bm{u}^T + g_{tt} \bm{s}\bm{s}^T , \\
        \bm{A}_2 &= g_{su} \bm{r}\bm{u}^T + g_{st} \bm{r}\bm{s}^T + g_{ts} \bm{s}\bm{r}^T .
    \end{aligned}
\end{equation}
In the formula, $\bm{A}_1$ and $(\bm{u}^T \bm{r})\bm{A}_2$ remain unchanged when $\{\bm{v},\bm{v}'\} \rightarrow \{\bm{v}',\bm{v}\}$, and the terms related to $\bm{u}$ vanish due to $\bm{\mathcal{P}}\bm{u}=\bm{0}$.
\fi

% where $\bm{\omega} = (|\bm{u}|^2 \bm{I} - \bm{u}\bm{u}^T)/|\bm{u}|^3$ in the Landau equation

To construct the generalized collision operator,
we denote $\bm{u}=\bm{v}-\bm{v}'$, $\bm{r}=\bm{v}+\bm{v}' - 2\bar{\bm v}$, $\bm{s}=\bm{u}\times\bm{r}$, and $\bm{\mathcal{P}}=\bm{I}-\bm{u}\bm{u}^T/|\bm{u}|^2$ as a projection operator, where $\bar{\bm v} = \rho^{-1}\int \bm v f(\bm v, t){\rm d}\bm v$ is the mean velocity. 
Furthermore, we denote $\widetilde{\bm{u}}=\bm{u}/|\bm{u}|$, $\widetilde{\bm{r}}=\bm{\mathcal{P}}\bm{r}/|\bm{\mathcal{P}}\bm{r}|$ and $\widetilde{\bm{s}}=\bm{s}/|\bm{s}|$ as mutually orthogonal unit vectors, satisfying $\widetilde{\bm{u}} \widetilde{\bm{u}}^{T} + \widetilde{\bm{r}} \widetilde{\bm{r}}^{T} + \widetilde{\bm{s}} \widetilde{\bm{s}}^{T} = \bm{I}$.
Accordingly, we can construct the collision kernel as:
\begin{equation}\label{eq:CM3}
    \begin{aligned}
        \bm{\omega} &= \bm{\omega}_{1} + \bm{\omega}_{2} + \bm{\omega}_{3}, \\
        \bm{\omega}_{1} &= g_{1}^{2} \bm{\mathcal{P}} \bm{r} \bm{r}^{T} \bm{\mathcal{P}}, \\
        \bm{\omega}_{2} &= g_{2}^{2} \bm{\mathcal{P}} \bm{s} \bm{s}^{T} \bm{\mathcal{P}}, \\
        \bm{\omega}_{3} &= (\bm{u}^{T}\bm{r}) g_{3} \bm{\mathcal{P}} (\bm{r}\bm{s}^{T} + \bm{s}\bm{r}^{T}) \bm{\mathcal{P}}.
    \end{aligned}
\end{equation}

It satisfies the conditions in Eq. \eqref{eq:conditions}.
However, the cross term $\bm{\omega}_{3}$ doesn't satisfy the symmetry condition $\bm{\omega}(\bm{v},\bm{v}') = \bm{\omega}(-\bm{v},-\bm{v}')$ in Prop. \ref{prop:symmetry}, so we have the general collision model as 
\begin{equation}\label{eq:CM2}
    \bm{\omega} = \bm{\mathcal{P}} \left(g_{r}^2 \widetilde{\bm{r}}\widetilde{\bm{r}}^T + g_{s}^2 \widetilde{\bm{s}}\widetilde{\bm{s}}^T\right) \bm{\mathcal{P}},
\end{equation}
named by ``CM2'' model, where ``2'' refers to the two subspaces spanned by $\widetilde{\bm{r}}$ and $\widetilde{\bm{s}}$.
We can further simplify it into ``CM1'' as $\bm{\omega} = g_{I}^{2}(|\bm{u}|) \bm{\mathcal{P}}$, which reduces to the Landau model \cite{landau1937kinetic} when $g_{I}^{2}(|\bm{u}|) \propto 1/|\bm{u}|$.

The generalized collision operator depends on the tensor product of the vector perpendicular to $\bm{u}$.
The functions $g_{\ast}$ are rotational invariant for $u$, $r$ and $s$, and are represented by neural networks $g_{\ast} = g_{\ast}(u,r,s)$, with $u=|\bm{u}|$, $r=|\bm{r}|$ and $s=|\bm{s}|$. Compared with the Landau \cite{landau1937kinetic} and Boltzmann model \cite{boltzmann1872weitere}, the present generalized collision operator enables us to capture the heterogeneous collisional energy transfer in the plane perpendicular to $\bm{v}-\bm{v}'$ arising from the collective interaction between the pair of collision particles and the environment, which leads to inhomogeneous energy transfer magnitude in directions $\widetilde{\bm{r}} \propto \bm{\mathcal{P}}\bm{r}$ and $\widetilde{\bm{s}} \propto \bm{u} \times \bm{r}$. Moreover, the magnitude functions not only depend on $u$ but also $r$ and $s$.

\section{Training details}\label{sec:Training}
\label{sec:TrainingDetails}
 
To directly train the generalized collision operator from the molecular dynamics (MD) simulations \cite{frenkel2023understanding}, we can rewrite the kinetic equation \eqref{eq:collision} as
\begin{equation}\label{eq:PartialInt}
    \begin{aligned}
        \left.\dfrac{\partial f}{\partial t}\right|_{c} =&
        \dfrac{\partial}{\partial\bm{v}_i}\int \bm{\omega}_{ij}\left(
        f(\bm{v}')\dfrac{\partial f(\bm{v})}{\partial \bm{v}_j}
        - f(\bm{v})\dfrac{\partial f(\bm{v}')}{\partial \bm{v}_j'}\right) \mathrm{d}\bm{v}' \\
        =& \dfrac{\partial}{\partial \bm{v}_i} \left( \dfrac{\partial f}{\partial \bm{v}_j} \int \bm{\omega}_{ij} f(\bm{v}') \mathrm{d}\bm{v}' \right)
        - \dfrac{\partial}{\partial \bm{v}_i} \left( f(\bm{v}) \int \bm{\omega}_{ij}\dfrac{\partial f(\bm{v}')}{\partial\bm{v}_j '} \mathrm{d}\bm{v}' \right) \\
        =& \dfrac{\partial^2}{\partial \bm{v}_i \partial \bm{v}_j} \left( f(\bm{v}) \int \bm{\omega}_{ij} f(\bm{v}') \mathrm{d}\bm{v}' \right)
        - \dfrac{\partial}{\partial \bm{v}_i} \left( f(\bm{v}) \int \left[ \dfrac{\partial \bm{\omega}_{ij}}{\partial \bm{v}_j} - \dfrac{\partial \bm{\omega}_{ij}}{\partial \bm{v}_j'} \right] f(\bm{v}') \mathrm{d}\bm{v}' \right),
        %+ \dfrac{\partial}{\partial \bm{v}_i} \left( f(\bm{v}) \int \dfrac{\partial \bm{\omega}_{ij}}{\partial \bm{v}_j'} f(\bm{v}') \mathrm{d}\bm{v}' \right)
    \end{aligned}
\end{equation}
where we compare the prediction of the time evolution of the PDF between the kinetic model and the MD simulation. One numerical challenge is the accurate estimation of 
the velocity PDF $f_{\rm MD}(\bm v,t)$ from the empirical distribution $f^{(l)} = 1/N \sum_{n=1}^{N} \delta(\bm{v}-\bm{v}_{n})$ at $t=t_{l}$. To alleviate this challenge, we reformulate the Eq. \eqref{eq:PartialInt} and construct the empirical loss  in a weak form, i.e., 
\begin{equation}\label{eq:loss}
    L=  \sum_{m=1}^{N_m} \sum_{l=1}^{N_t} \sum_{k=1}^{N_\psi}  \left(\left.\dfrac{\partial f^{(l,m)}}{\partial t}\right|_{\rm MD}-\left.\dfrac{\partial f^{(l,m)}}{\partial t}\right|_{\rm c}, \psi_k\right)^2,
    % L=\sum_k \left(\left.\dfrac{\partial f}{\partial t}\right|_{MD}-\left.\dfrac{\partial f}{\partial t}\right|_{c},\psi_k(\bm{v})\right)^2.
\end{equation}
where $\psi_k(\bm{v})$ are test functions, $(\cdot,\cdot)$ represents the inner product, $f^{(l,m)} = f^{(m)}(\bm v, t_l)$ represents the density at $t=t_{l}$ and $m$ is the index representing the MD simulations starting with various initial distributions. In this work, we choose $N_{m}=3$ and select the uniform, bi-Maxwellian, and symmetric double-well distributions as the initial velocity distribution, see detailed form in Eq. \eqref{eq:trainingset} in Sec. \ref{sec:MDSet}. For the test functions, we choose $N_\psi = 4$ and use $\psi_{k}(\bm{v}) = \exp(-\bm{v}^2)$, $\bm{v}^2 \exp(-\bm{v}^2)$, $\exp[-(\bm{v}^2 - C_0)^2]$, and $\exp[-(\bm{v} - \bm{c}_0)^2]$ where $C_{0}$ and $\bm{c}_{0}$ are constants. 

To compute the empirical loss, we note that the MD part of the inner product in Eq. \eqref{eq:PartialInt} can be precomputed by the empirical distribution, i.e., 
\begin{equation}\label{eq:WeakMD}
    \left(\left.\dfrac{\partial f}{\partial t}\right|_{\rm MD}, \psi_k(\bm{v})\right) =\left(\dfrac{f^{l+1}-f^{l}}{\delta t},\psi_k(\bm{v})\right)
    =\dfrac{1}{N\delta t}\sum_{n=1}^{N} (\psi_k(\bm{v}_{n}^{l+1})-\psi_k(\bm{v}_{n}^{l})).
\end{equation}
where we omit $m$ for simplicity. 
However, the kinetic collision part of the inner product in Eq. \eqref{eq:PartialInt} involves double summations over the MD particles,
\begin{equation}\label{eq:WeakCol}
    \begin{aligned}
        \left(\left.\dfrac{\partial f}{\partial t}\right|_{c},\psi_k(\bm{v})\right) =&
        \left(f(\bm{v})\int \bm{\omega}_{ij} f(\bm{v}')\mathrm{d}\bm{v}', \dfrac{\partial \psi_k(\bm{v})}{\partial \bm{v}_i \partial \bm{v}_j}\right)
        + \left( f(\bm{v}) \int \left[ \dfrac{\partial \bm{\omega}_{ij}}{\partial \bm{v}_j} - \dfrac{\partial \bm{\omega}_{ij}}{\partial \bm{v}_j '} \right] f(\bm{v}') \mathrm{d}\bm{v}', \dfrac{\partial \psi_k (\bm{v})}{\partial \bm{v}_i} \right) \\
        =& \dfrac{1}{N^2}\sum_{n,n'}^{N} \bm{\omega}_{ij}(\bm{v}_{n},\bm{v}_{n'} ') \dfrac{\partial^2 \psi_k}{\partial \bm{v}_i \partial \bm{v}_j}(\bm{v}_{n}) + \left[ \dfrac{\partial \bm{\omega}_{ij}}{\partial \bm{v}_j} - \dfrac{\partial \bm{\omega}_{ij}}{\partial \bm{v}_j '} \right](\bm{v}_{n},\bm{v}_{n'} ') \dfrac{\partial\psi_k}{\partial \bm{v}_i} (\bm{v}_{n}) ,
    \end{aligned}
\end{equation}
which can be considered as the statistical average of $\dfrac{\partial \bm{\omega}_{ij}}{\partial \bm{v}_j}$, $\dfrac{\partial \bm{\omega}_{ij}}{\partial \bm{v}_j'}$,  $\bm{\omega}_{ij}$ and derivatives of $\psi_k(\bm{v})$ over the samples from MD simulations. %at time $t$
However, since $N\sim \mathcal{O}(10^6)$, the direct pairwise summation over $\{\bm{v}_{n},\bm{v}_{n}'\}$ is not feasible. To circumvent this difficulty, we can use the mini-batch approach \cite{jin2020random, jin2021random}, i.e., 
\begin{equation}\label{eq:WeakColRBM}
    \left(\left.\dfrac{\partial f}{\partial t}\right|_{c},\psi_k(\bm{v})\right) \approx \dfrac{1}{P}\sum_{p=1}^{P} \bm{\omega}_{ij}(\bm{v}_{n(p)},\bm{v}_{n'(p)} ') \dfrac{\partial^2 \psi_k}{\partial \bm{v}_i \partial \bm{v}_j}(\bm{v}_{n(p)}) + \left[ \dfrac{\partial \bm{\omega}_{ij}}{\partial \bm{v}_j} - \dfrac{\partial \bm{\omega}_{ij}}{\partial \bm{v}_j '} \right](\bm{v}_{n(p)},\bm{v}_{n'(p)} ') \dfrac{\partial\psi_k}{\partial \bm{v}_i} (\bm{v}_{n(p)}) ,
\end{equation}
where $\left\{n(p), n'(p)\right\}_{p=1}^P$ represents a set of pairs of indices randomly chosen from the full MD samples with $1\le n(p), n'(p)\le N$ and $n(p)\neq n'(p)$ for each training step. In this work, we choose $P = 10^5$ which proves to be efficient to train the collision model.  As a special case, for the CM1 model, we note that $\bm{\omega}_{ij} = g_{I}^{2}(|\bm{u}|)\bm{\mathcal{P}}$, $\frac{\partial \bm{\omega}_{ij}}{\partial \bm{v}_j} = - \frac{\partial \bm{\omega}_{ij}}{\partial \bm{v}_j '}$.
%Since the collision operators are presented as neural networks, we use several velocity trajectory data to compute the loss in Eq. \eqref{eq:loss} and optimize network parameters.

The encoder functions $g_r$ and $g_s$ are parameterized as $6$ layer fully connected neural networks. Each hidden layer consists of $10$ neurons.  The networks are trained by Adam \cite{Kingma_Ba_Adam_2015} for $6\times 10^{5}$ steps. For each step, $10^{5}$ pairs of collision particles will be randomly selected as one training batch.  The initial learning rate is $0.01$ and the decay rate is $0.99$ per $2000$ steps.

\section{The setup of MD simulations}\label{sec:MDSet}

In this study, we simulate the OCP of monovalent cation systems consisting of $10^6$ charged particles in a $10^2\times10^2\times10^2$\AA$^3$ domain with a periodic boundary condition imposed on each direction. The particles interact via the Coulomb potential in a homogeneous neutralized background of electrons. In particular, we consider two temperatures with $k_B T = 100 \text{eV}$ and $10 \text{eV}$, where the corresponding plasma coupling parameter $\Gamma$ is $\mathcal{O}(0.1)$ and $\mathcal{O}(1)$, respectively. 
%
%The initial condition of the MD simulations takes the equilibrium distribution of the particle position. The velocity distribution is given by Eqs. \eqref{eq:trainingset} and \eqref{eq:testset}. 
The time integration is conducted by the velocity-Verlet scheme \cite{frenkel2023understanding} with the time step $dt = 10^{-4}~\text{fs}$.  The long-range force is calculated with the particle-particle-particle-mesh Ewald (PPPM) method \cite{hockney1988computer} with $10^{-4}$ relative accuracy.
% The charge density is $10^{24}~ \text{cm}^{-3}$.

For each simulation,  we conduct an isothermal simulation of the target temperature to obtain the equilibrium configurations.  The particle velocity is randomly sampled from various PDFs in the training set in Eq. \eqref{eq:trainingset} and test set in Eq. \eqref{eq:testset}. Specifically, the training set includes the uniform distribution $f_1(\bm{v})$, bi-Maxwellian distribution $f_2(\bm{v})$ and symmetric double-well distribution $f_3(\bm{v})$. The test set includes the trimodal distribution $f_4(\bm{v})$, $\chi^2$ distribution $f_5(\bm{v})$, and radial oscillation distribution $f_6(\bm{v})$.

\noindent\textbf{Training set:}
\begin{subequations}\label{eq:trainingset}
    \begin{align}
        f_1(\bm{v}) &\sim U[-\sqrt{3}\sigma, \sqrt{3}\sigma]^3,\\
        f_2(\bm{v}) &\sim \prod_{i=1}^{3} \exp(-\bm{v}_i^2 / 2\sigma_i^2), \\
        f_3(\bm{v}) &\sim \prod_{i=1}^{3} [\alpha_{i_1}\exp(-(\bm{v}_i-b_{i_1})^2 / 2\sigma_{i_1}^2) + \alpha_{i_2}\exp(-(\bm{v}_i+b_{i_2})^2 / 2\sigma_{i_2}^2)],
    \end{align}
\end{subequations}
\textbf{Test set:}
\begin{subequations}\label{eq:testset}
    \begin{align}
        f_4(\bm{v}) &\sim \bm{v}_{12}^2 \exp(-\bm{v}_{12}^2) \times g(\theta_{12}) \times f_{eq}(\bm{v}_3),\quad 
        g(\theta_{12}) \sim \sum_{n=1}^{3} N(2n\pi/3,1/16), \\ 
        % \quad \tan(\theta) = \bm{v}_2 / \bm{v}_1, \\
        f_5(\bm{v}) &\sim |\bm{v}|^2 \exp(-\bm{v}^2 / 2\sigma^2), \\
        f_6(\bm{v}) &\sim \exp(-\alpha \bm{v}^2) \times [\cos(\bm{v}^2)]^2 .
    \end{align}
\end{subequations}

\iffalse
\begin{equation}\label{eq:initial}
    \begin{aligned}
        f_1(\bm{v}) &\sim U[-\sqrt{3}\sigma, \sqrt{3}\sigma]^3,\\
        f_2(\bm{v}) &\sim \prod_{i=1}^{3} \exp(-\bm{v}_i^2 / 2\sigma_i^2), \\
        f_3(\bm{v}) &\sim \prod_{i=1}^{3} [\alpha_{i1}\exp(-(\bm{v}_i-b_{i1})^2 / 2\sigma_{i1}^2) + \alpha_{i2}\exp(-(\bm{v}_i+b_{i2})^2 / 2\sigma_{i2}^2)], \\
        f_4(\bm{v}) &\sim \bm{v}_{12}^2 \exp(-\bm{v}_{12}^2) \times g(\theta_{12}) \times f_{eq}(\bm{v}_3),\quad 
        g(\theta_{12}) \sim \sum_{n=1}^{3} N(2n\pi/3,1/16), \\ 
        % \quad \tan(\theta) = \bm{v}_2 / \bm{v}_1, \\
        f_5(\bm{v}) &\sim |\bm{v}|^2 \exp(-\bm{v}^2 / 2\sigma^2), \\
        f_6(\bm{v}) &\sim \exp(-\alpha \bm{v}^2) \times [\cos(\bm{v}^2)]^2 .
    \end{aligned}
\end{equation}
\fi

The system is equilibrated by $0.4~\text{fs}$ and followed by a production phase of $20~\text{fs}$.
%The initial equilibrium period accounts for the dependence on inconsistencies in initial velocity and particle position configurations, which are omitted in both the training and simulation processes.
We emphasize that only velocity samples collected from the training set in Eq. \eqref{eq:trainingset} are utilized to learn the generalized collision operators; see details in Sec. \ref{sec:TrainingDetails}. On the other hand, the constructed generalized collision operator will be examined with the kinetic processes in both the training and test set by comparing the predictions with those obtained from empirical Landau and Boltzmann models as well as the full MD simulations. The (time scaling) parameters of the empirical models are chosen as the optimal values by matching the MD results of the relaxation process for the bi-Maxwellian cases.

% Within each MD simulation, we consider the initial velocity distribution after $0.4~\text{fs}$ MD simulation, to prevent any mismatch between the specified initial velocities and the particle position configurations of the system.
% In the Landau equation, the optimal time step is determined by aligning the temperature evolution curves for the initial bi-Maxwellian distributions with initial temperature ratios of $4:4:1$ and $1:1:4$.

% In addition to the OCP model, the Yukawa interaction model\cite{yukawa1935interaction} with Debye shielding length can also be used to simulate plasma systems.
% The Yukawa potential can be derived from the linearized Poisson-Boltzmann equation, as
% \begin{equation}\label{eq:Yukawa}
%     \phi(\bm{r}) = \dfrac{1}{4\pi\epsilon_0} \dfrac{q^2 e^{-\kappa r}}{r} , \qquad
%     \kappa^2 = \dfrac{n_e q^2}{\epsilon_0 k_B T_e},
% \end{equation}
% where $T_e$ is the temperature of electrons, usually higher than the ion temperature.
% The background electrons are treated as an electronic cloud around charges to screen the Coulomb interaction.
% When $T_e\rightarrow\infty$, the Yukawa model becomes the OCP model.
% From the MD results, it is found that when the screened Coulomb potential with the same electron temperature and ion temperature is used, the system evolution process is close to the Landau equation, which is due to the linear Poisson-Boltzmann equation approximation.

Finally, we briefly discuss the implications of the different temperatures considered in this work, which specifies the ratio of the Coulomb energy to the thermal energy and 
quantifies by the plasma coupling parameter $\Gamma$, i.e., 
\begin{equation}\label{eq:CoulpPara}
    \Gamma = \dfrac{q_e^2}{4\pi \epsilon_0 k_B T}\sqrt[3]{\dfrac{4\pi n}{3}}.
\end{equation}
% \href{https://en.wikipedia.org/wiki/Plasma_parameter#Coupling_parameter}{``plasma coupling parameter''}
% and another plasma parameter is the ratio of the maximum impact parameter to the classical distance of closest approach in Coulomb Debye shielding $\Lambda$,
% \begin{equation}\label{eq:CoulpPara2}
%         \Lambda = 4\pi n \lambda_D^3 , \qquad
%         \lambda_D^3 = \left( \dfrac{\epsilon k_B T}{\sum_{j=1}^{N} n_j q_j^2} \right)^{1/2} ,
% \end{equation}
% where $n$ and $n_j$ are the particle number density.
For physical conditions of the present OCP system, $\Gamma$ is $0.232$ and $2.321$ for $k_{B}T = 100 ~ \text{eV}$ and $10 ~ \text{eV}$, respectively. In particular, for the regime $\Gamma \sim \mathcal{O}(0.1)$, the particle correlation is weak; both the present generalized form and the empirical Landau model yield accurate predictions. On the other hand, for the regime $\Gamma \sim O(1)$, %($\Lambda > 1$)
the particle correlation becomes non-negligible which leads to inhomogeneous collisional energy transfer.  The empirical Landau form generally shows limitations. Conversely, the present generalized collision operator can faithfully capture this effect and therefore accurately predict the kinetic processes in this regime; see the following two sections for details.

\section{Validation of the high temperature regime}\label{sec:HighTemp}

To validate the constructed model, let us start with the OCP system at a high temperature of $100 ~ \text{eV}$ by using the ``CM1'' model and the Landau equation, and compare the results with MD simulations.
In all cases, the results obtained from the Landau equation and the ``CM1'' model closely align with the exact MD solutions, validating the accuracy of the proposed collision operator.
The consistency suggests that small-angle scattering is the dominant interaction mechanism at high temperatures, allowing the Landau equation to provide a reliable description of the evolution of the velocity distribution of the particles. Such consistency is also observed in plasma with lower density or higher temperature.

\begin{figure}[H]
    \centering
    \includegraphics[width=0.45\textwidth]{./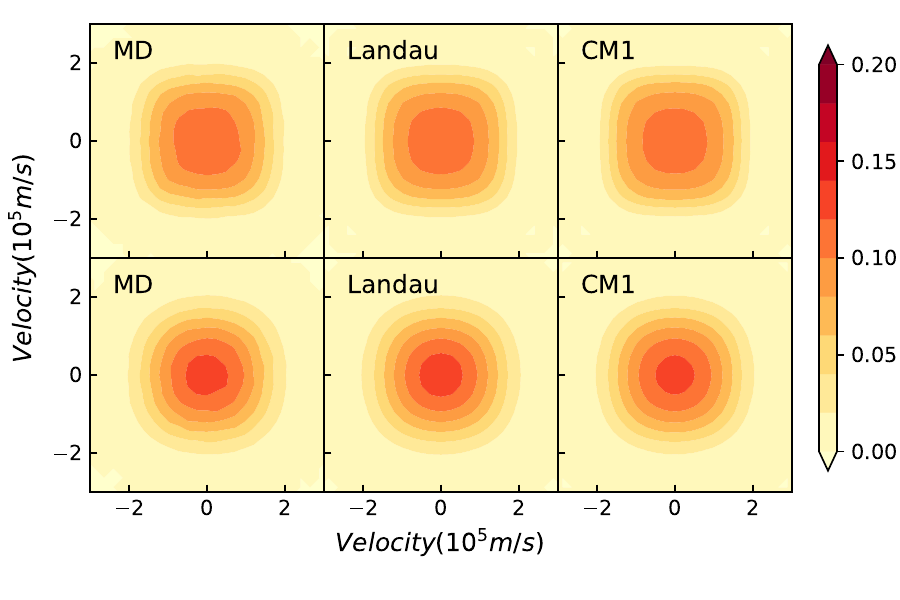}
    \caption{The instantaneous velocity PDF in the $v_{1}\mhyphen v_{2}$ plane from a uniform initial distribution predicted by the full MD, the Landau and the CM1 collision model at $t = 2~\text{fs}$ (upper) and $4~\text{fs}$ (lower).}
    \label{fig:100Uni}
\end{figure}

\begin{figure}[H]
    \centering
    \includegraphics[width=0.45\textwidth]{./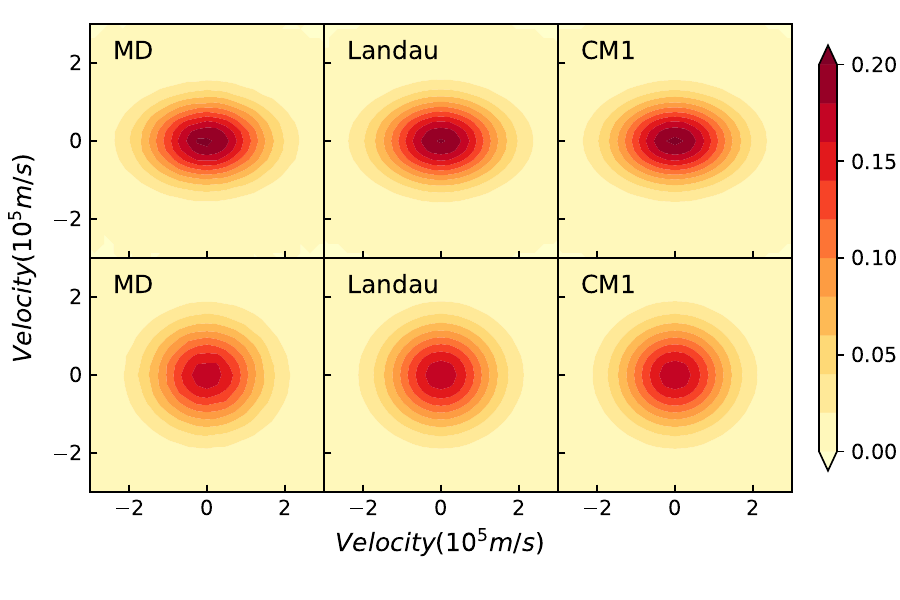}
    \includegraphics[width=0.45\textwidth]{./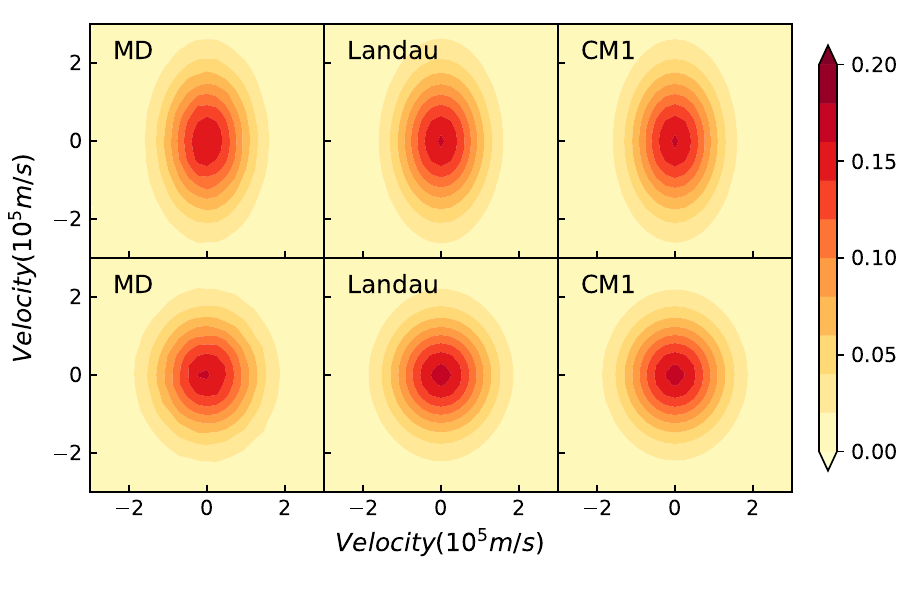}
    \caption{The instantaneous velocity PDF in the $v_{1}\mhyphen v_{2}$ plane from bi-Maxwellian initial distributions predicted by the full MD, the Landau and the CM1 collision model at $t = 2~\text{fs}$ (upper) and $10~\text{fs}$ (lower).}
    \label{fig:100biMax}
\end{figure}

\begin{figure}[H]
    \centering
    \includegraphics[width=0.45\textwidth]{./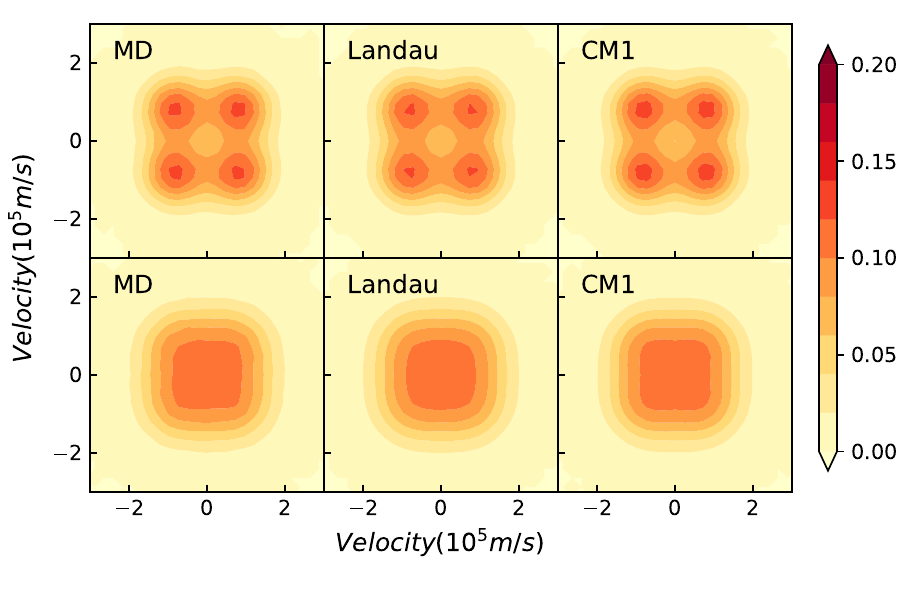}
    \includegraphics[width=0.45\textwidth]{./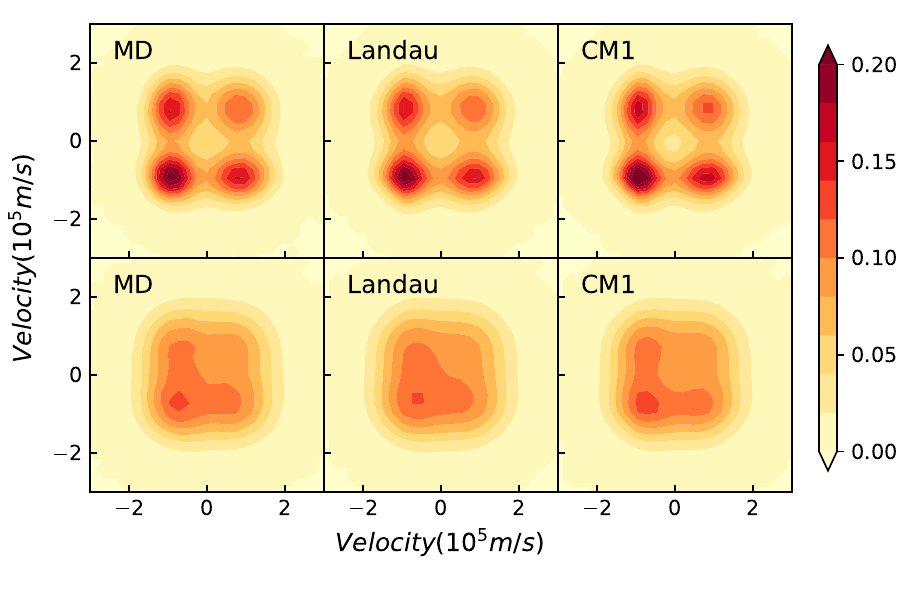}
    \caption{The instantaneous velocity PDF in the $v_{1}\mhyphen v_{2}$ plane from double-well initial distributions predicted by the full MD, the Landau and the CM1 collision model at $t = 1~\text{fs}$ (upper) and $3~\text{fs}$ (lower).}
    \label{fig:100double}
\end{figure}

\begin{figure}[H]
    \centering
    \includegraphics[width=0.45\textwidth]{./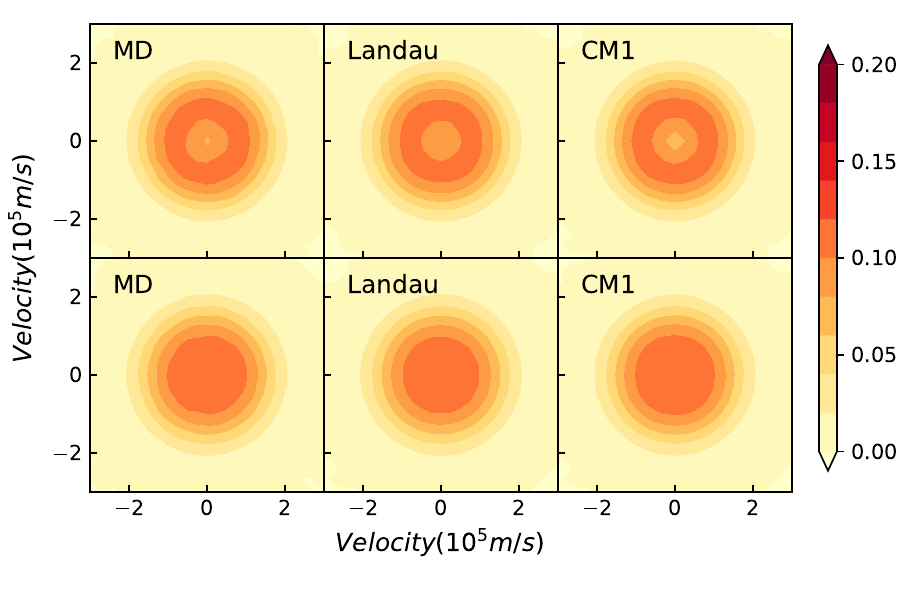}
    \includegraphics[width=0.45\textwidth]{./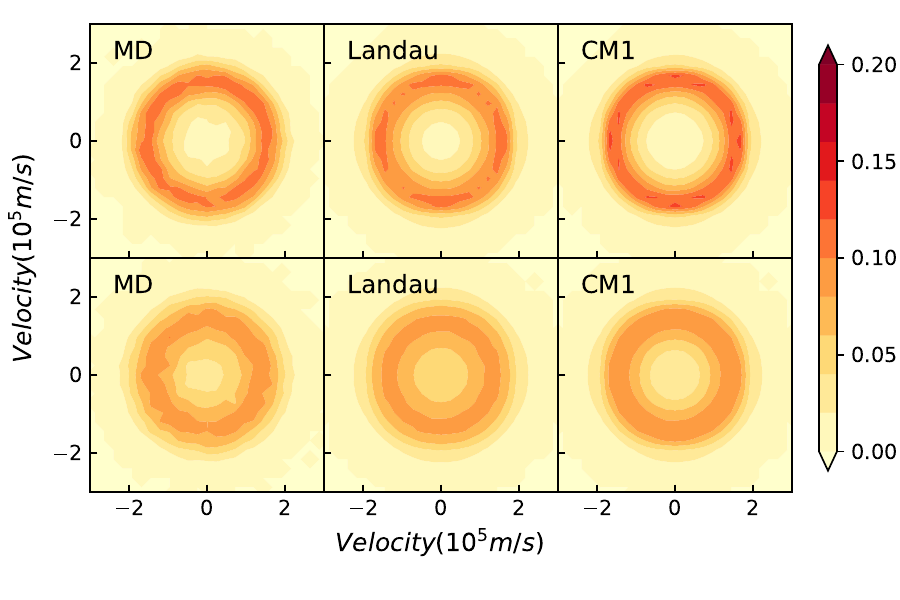}
    \caption{The instantaneous velocity PDF in the $v_{1}\mhyphen v_{2}$ plane from radial oscillation initial distributions predicted by the full MD, the Landau and the CM1 collision model at $t = 1~\text{fs}$ (upper) and $2~\text{fs}$ (lower).}
    \label{fig:100radial}
\end{figure}

\begin{figure}[H]
    \centering
    \includegraphics[width=0.45\textwidth]{./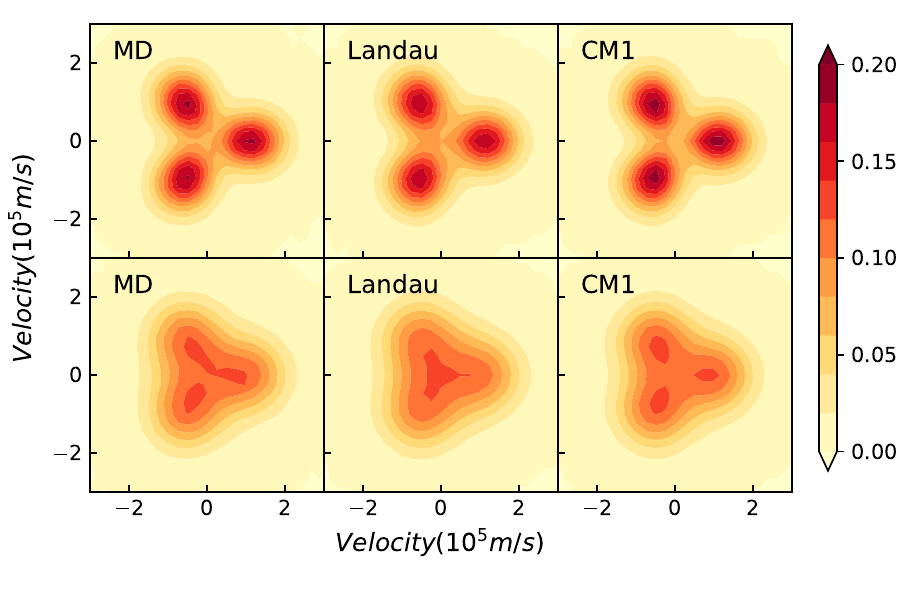}
    \caption{The instantaneous velocity PDF in the $v_{1}\mhyphen v_{2}$ plane from trimodal initial distributions predicted by the full MD, the Landau and the CM1 collision model at $t = 1~\text{fs}$ (upper) and $2~\text{fs}$ (lower).}
    \label{fig:100tri}
\end{figure}

\section{Additional results of the low temperature  regime}\label{sec:LowTemp}

\subsection{Encoder functions of the  CM2 model}

As the temperature decreases to $10 ~ \text{eV}$, the plasma exhibits different kinetic properties.  Unlike the high-temperature regime, the small-angle scattering is no longer dominant; the collective interactions between the pair of collision particles and the environment need to be properly accounted for. In particular, these collective interactions result in a second energy transfer from the collective motion $(\bm v + \bm v')/2$ to the orthogonal plane. Due to the energy conservation, this second energy transfer is restricted to the null space of $\bm u$ (i.e., the projection by $\bm{\mathcal{P}} = \bm{I} - \bm{u}\bm{u}^T/\vert \bm{u}\vert ^{2}$).
As a result, this collective interaction will lead to a net energy transfer from $\widetilde{\bm{r}} \propto \bm{\mathcal{P}}\bm{r}$ to $\widetilde{\bm{s}} \propto \bm{u} \times \bm{r}$, i.e., $g_{r}^2 = g_{u,\perp}^2 - \delta g_{r, \perp}^2$, $g_{s}^2 = g_{u,\perp}^2 + \delta g_{r, \perp}^2$, and therefore $g_r^2 < g_u^2$ in contrast to $g_r^2 \equiv g_u^2 \propto 1/u$ for the Landau model.

The present generalized collision CM2 model enables us to capture this effect, where the encoder functions $g_{r}(u,r,s)$ and $g_{s}(u,r,s)$ denote the energy transfer along the different directions and can be directly learned from the MD results.  Fig. \ref{fig:grgs} shows the 2D contour of the constructed $g_r(u,r,s^{\ast})$ and $g_s(u,r,s^{\ast})$ for $s^{\ast} = 0.2$, $0.4$, and $0.6$, as well as the ensemble average result (the same as Fig. 1 in the main manuscript). We observe that $g_r^2 < g_s^2$ for all of the cases. These numerical results verify
the heterogeneous nature arising from the aforementioned second energy transfer that has been overlooked in the empirical Landau model.

% Next, we examine the evolution of the velocity distribution in a low-temperature high-density plasma at $10 ~ \text{eV}$.
% Numerical simulations are conducted for various initial velocity distributions using the Landau equation, the ``CM1'' and ``CM2'' models, and the results are compared with MD simulations.
% The findings indicate that the Landau equation deviates from the MD results, whereas the ``CM1'' method accurately captures the marginal density distributions.
% The ``CM2'' method further improves numerical accuracy by accounting for interactions between colliding particles and the environment.
% The interaction between the particles and the environment can be expressed in the encoder functions $g_{r}(u,r,s)$ and $g_{s}(u,r,s)$ of the collision operator $\bm{\omega}$.
% Under the Maxwell distribution, the expectation of $s$ is $0.38$ ($\times 10^{10} m^{2}/s^{2}$).
% We compare the two encoder functions with different $s$.
\begin{figure}[H]
    \centering
    \includegraphics[width=0.8\textwidth]{./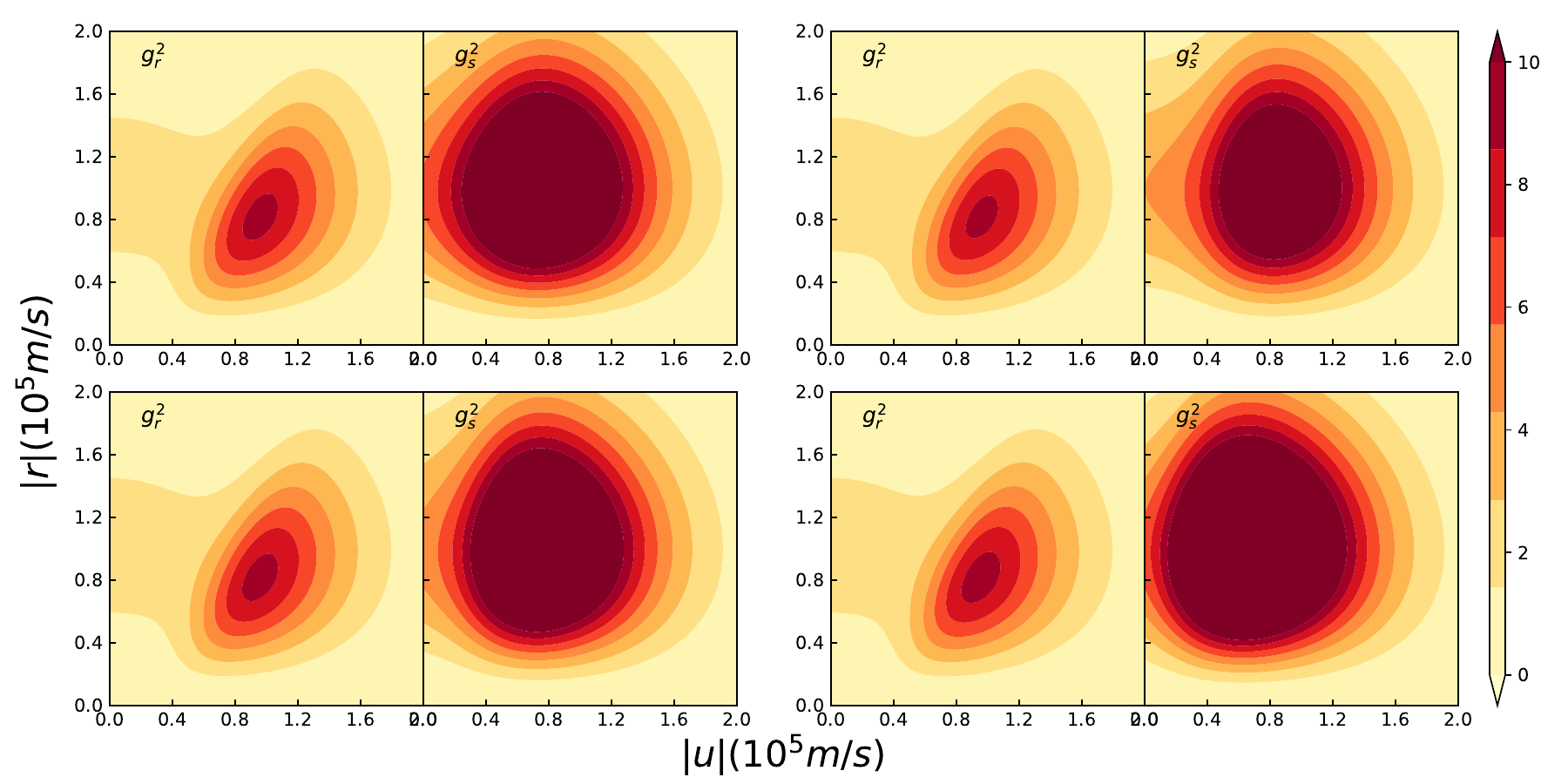}
    \caption{Encode functions $g_{r}(u,r,s)$ and $g_{s}(u,r,s)$ with the ensemble average over $s$ (upper left), $s=0.2$ (upper right), $s=0.4$ (lower left) and $s=0.6$ (lower right). The heterogeneous effect $g_r^2 < g_s^2$ holds for all the cases and verifies the effect of the broadly overlooked second energy transfer arising from the collective interactions between the pair of collision particles and the environment.}
    \label{fig:grgs}
\end{figure}

% Analogous to the Landau equation, the collisions between the colliding particles and the environment lead to an additional energy transfer to the plane perpendicular to $\widetilde{\bm{u}} \propto \bm{v}-\bm{v}'$.
% Our model takes into account further the energy dissipation in the $\widetilde{\bm{r}} \propto \bm{\mathcal{P}}(\bm{v}+\bm{v}')$ direction, which ultimately leads to $g_{r}^{2}<g_{s}^{2}$.
% In this paper, the cross-term $s$ has little influence on the collision operator but may have an effect in more restricted systems.

\subsection{Additional results of the plasma kinetics in the low-temperature regime}

To verify the effectiveness of the present generalized collision model (CM2), we conduct simulations of kinetic processes with the initial conditions following the PDFs in Eqs. \eqref{eq:trainingset} and \eqref{eq:testset}. Figs. \ref{fig:10biMax}, \ref{fig:10double}, \ref{fig:10radial} and \ref{fig:10corr} show the numerical results in comparison with the predictions from the Landau model and the full MD simulations (see Fig. 3 in the main manuscript for the case of the trimodal distribution). For all the cases, the predictions of the present generalized collision model show good agreement with the full MD results. In contrast, the predictions from the Landau model show apparent deviations.  These numerical results reveal the crucial role of the heterogeneous energy transfer effect that has been over-simplified in the Landau model. 

% The simulation results of the ``CM2'' model are performed with the encoder functions $g_{r}$ and $g_{s}$ and compared with the Landau equation and the MD solutions.
% % In addition to the results presented in the main text,
% Additional results are displayed below with various initial velocity distributions.
\begin{figure}[H]
    \centering
    \includegraphics[width=0.45\textwidth]{./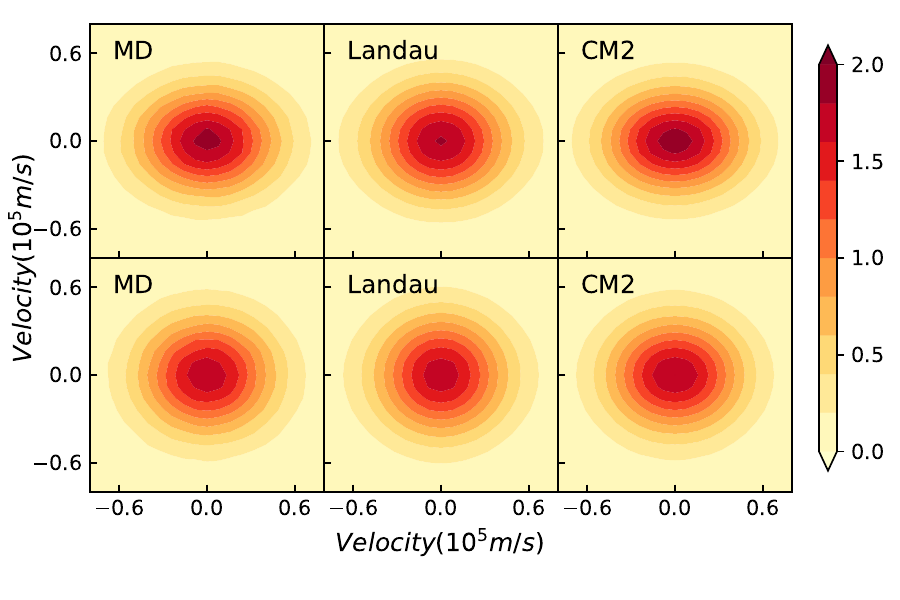}
    \includegraphics[width=0.45\textwidth]{./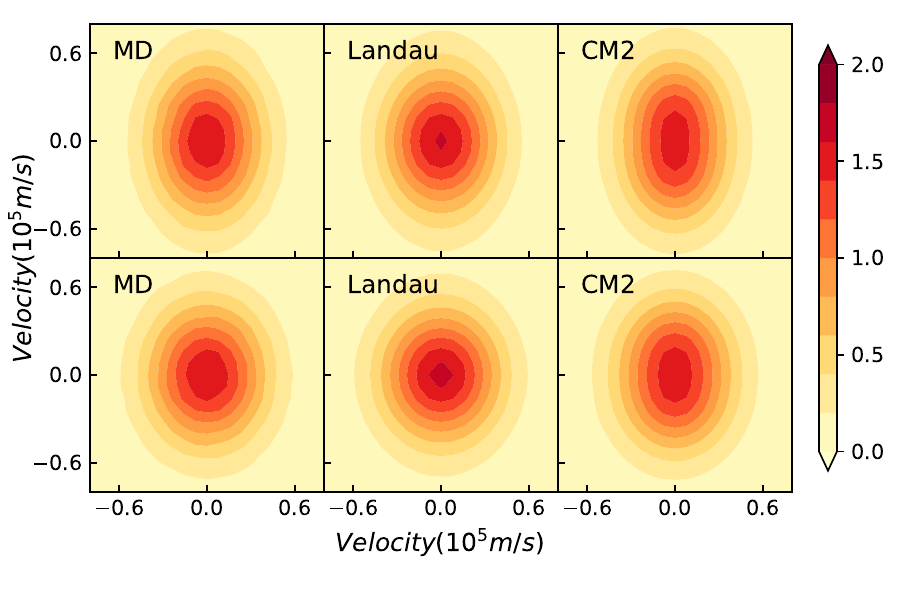}
    \caption{The instantaneous velocity PDF in the $v_{1}\mhyphen v_{2}$ plane from bi-Maxwellian distributions predicted by the full MD, the Landau and the CM2 collision model at $t = 1~\text{fs}$ (upper) and $2~\text{fs}$ (lower).}
    \label{fig:10biMax}
\end{figure}

\begin{figure}[H]
    \centering
    \includegraphics[width=0.45\textwidth]{./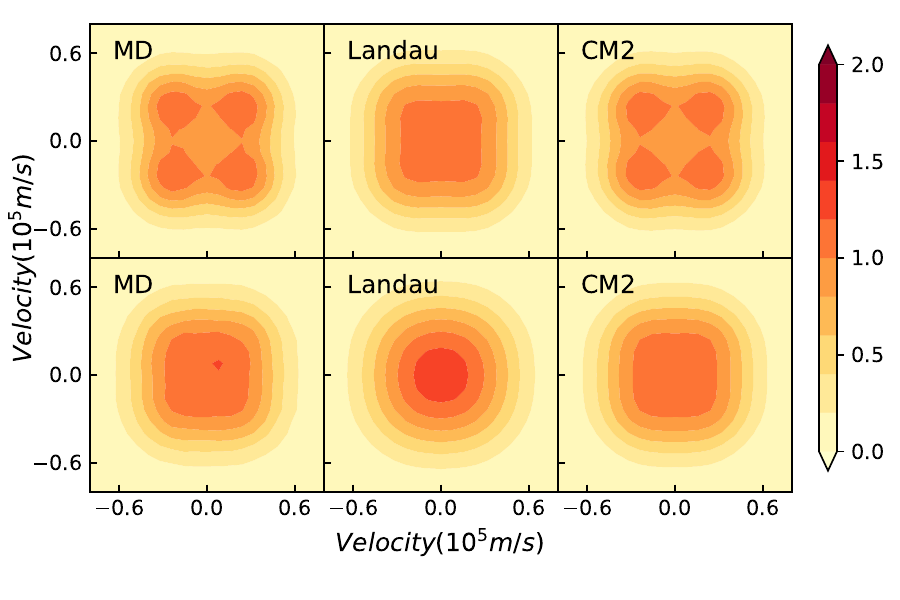}
    \includegraphics[width=0.45\textwidth]{./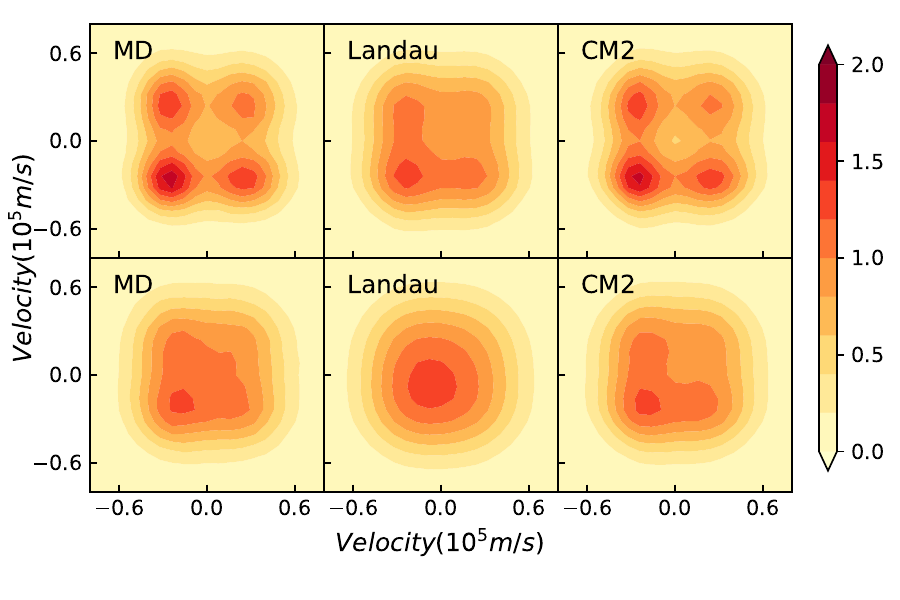}
    \caption{The instantaneous velocity PDF in the $v_{1}\mhyphen v_{2}$ plane from double-well distributions predicted by the full MD, the Landau and the CM2 collision model at $t = 0.4~\text{fs}$ (upper) and $0.8~\text{fs}$ (lower).}
    \label{fig:10double}
\end{figure}

\begin{figure}[H]
    \centering
    \includegraphics[width=0.45\textwidth]{./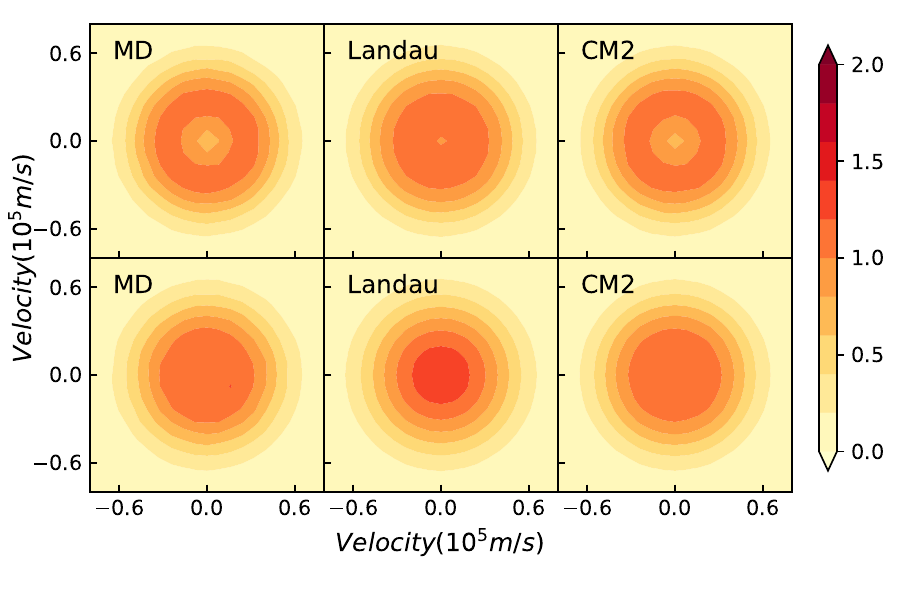}
    \includegraphics[width=0.45\textwidth]{./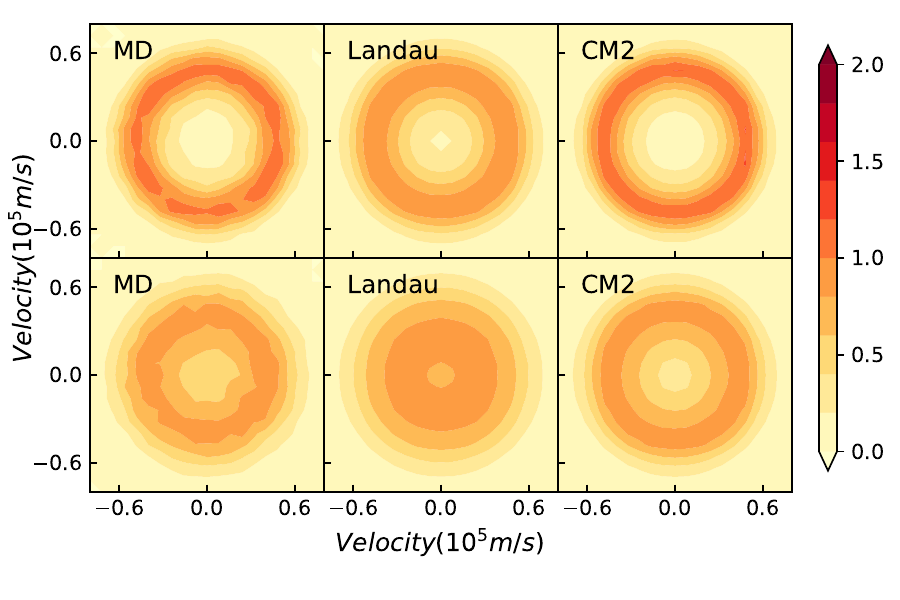}
    \caption{The instantaneous velocity PDF in the $v_{1}\mhyphen v_{2}$ plane from radial oscillation distributions predicted by the full MD, the Landau and the CM2 collision model at $t = 0.2~\text{fs}$ (upper) and $0.6~\text{fs}$ (lower).}
    \label{fig:10radial}
\end{figure}

\begin{figure}[H]
    \centering
    \includegraphics[width=0.45\textwidth]{./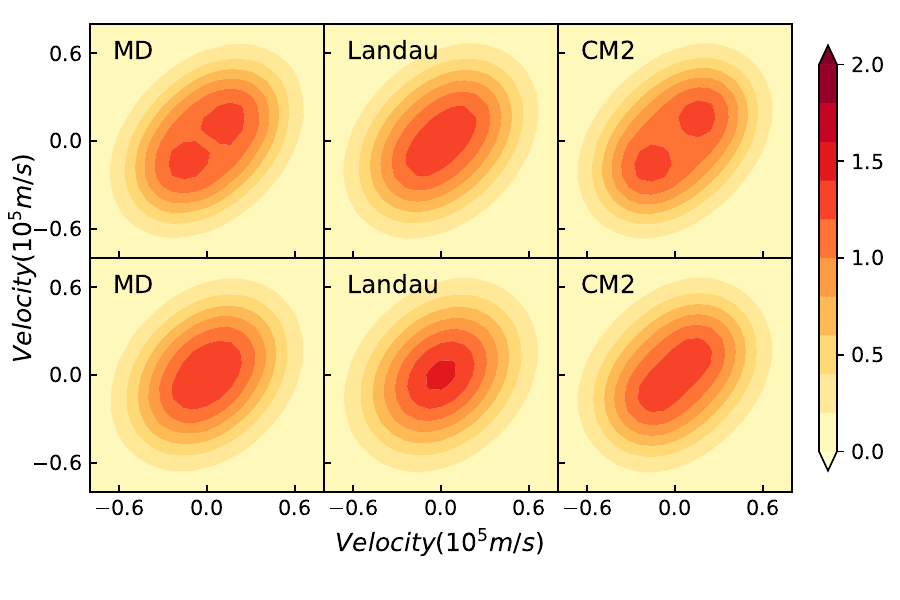}
    \caption{The instantaneous velocity PDF in the $v_{1}\mhyphen v_{2}$ plane from a correlated double-well distribution predicted by the full MD, the Landau and the CM2 collision model at $t = 0.4~\text{fs}$ (upper) and $1~\text{fs}$ (lower).}
    \label{fig:10corr}
\end{figure}

\subsection{Time scaling parameter of the Landau model}

We show that the inaccurate prediction of the Landau model with the kernel $\bm\omega_{\rm Landau} =  c_0/u \mathcal{P}$ shown in Figs. \ref{fig:10biMax}, \ref{fig:10double}, \ref{fig:10radial} and \ref{fig:10corr} are due to the over-simplified formulation, and can not be remedied by re-scaling the time parameter $c_0$. In particular, we choose a new time scale parameter $c_0$ for each case so that the simulation result of the Landau model best matches the MD solutions at $t=0.2~\text{fs}$. However, the Landau model yields inaccurate predictions of the subsequent PDF evolution, as shown in Figs. \ref{fig:time_double}, \ref{fig:time_radial}, \ref{fig:time2_biMax} and \ref{fig:time2_double}. In contrast, the predictions of the ``CM2'' model show good agreement with the full MD results. 

\begin{figure}[H]
    \centering
    \includegraphics[width=0.45\textwidth]{./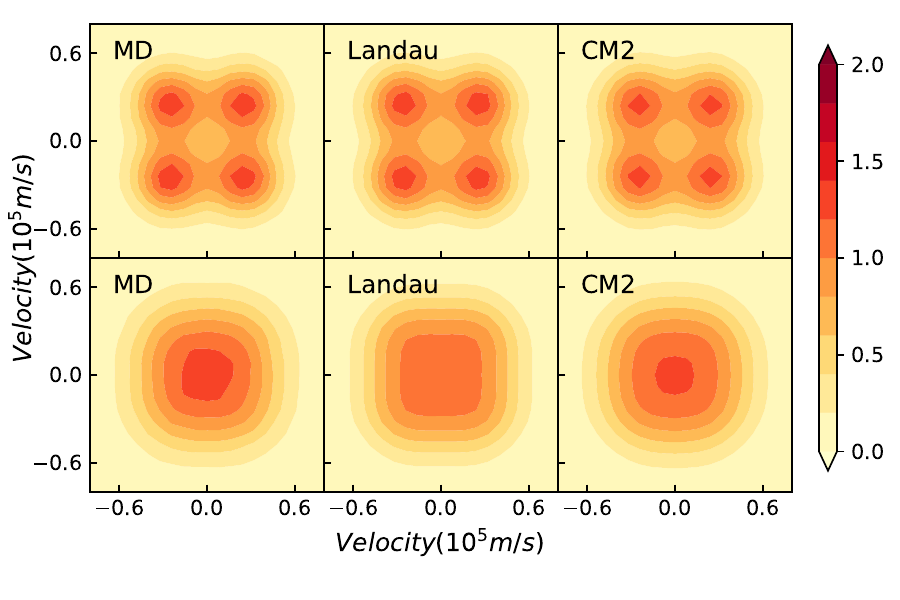}
    \caption{The instantaneous velocity PDF in the $v_{1}\mhyphen v_{2}$ plane from a symmetric double-well distribution predicted by the full MD, the Landau and the CM2 collision model at $t = 0.2~\text{fs}$ (upper) and $1~\text{fs}$ (lower).}
    \label{fig:time_double}
\end{figure}

\begin{figure}[H]
    \centering
    \includegraphics[width=0.45\textwidth]{./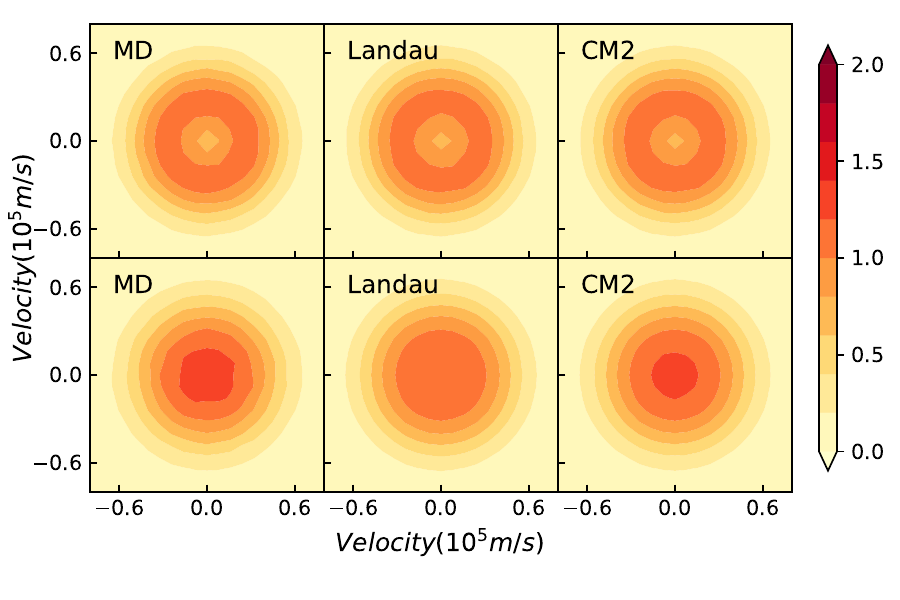}
    \caption{The instantaneous velocity PDF in the $v_{1}\mhyphen v_{2}$ plane from a radial oscillation distribution predicted by the full MD, the Landau and the CM2 collision model at $t = 0.2~\text{fs}$ (upper) and $0.8~\text{fs}$ (lower).}
    \label{fig:time_radial}
\end{figure}

\begin{figure}[H]
    \centering
    \includegraphics[width=0.45\textwidth]{./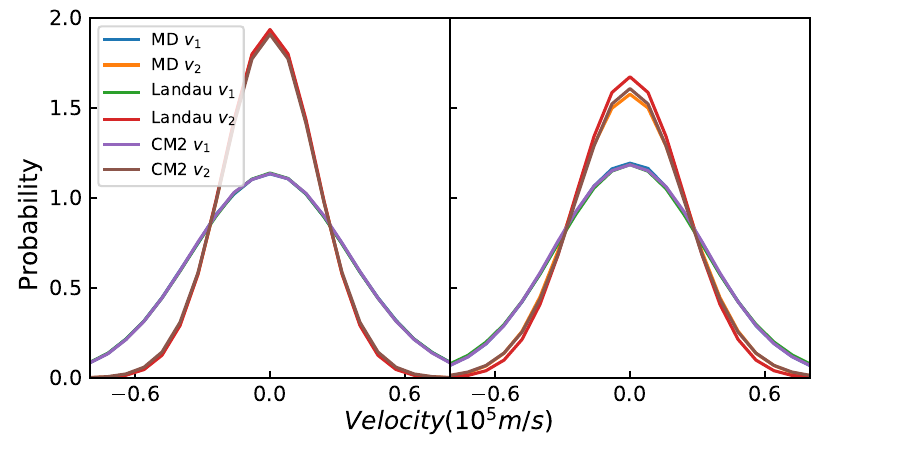}
    \caption{The instantaneous velocity marginal PDF in the $v_{1}$ and $v_{2}$ axis from a bi-Maxwellian distribution predicted by the full MD, the Landau and the CM2 collision model at $t = 0.2~\text{fs}$ (left) and $1~\text{fs}$ (right).}
    \label{fig:time2_biMax}
\end{figure}

\begin{figure}[H]
    \centering
    \includegraphics[width=0.45\textwidth]{./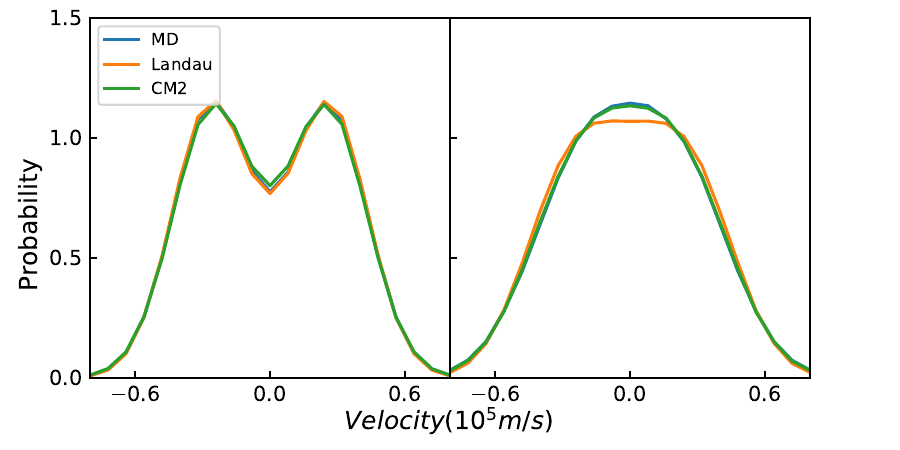}
    \caption{The instantaneous velocity marginal PDF from a double-well distribution predicted by the full MD, the Landau and the CM2 collision model at $t = 0.2~\text{fs}$ (left) and $1~\text{fs}$ (right).}
    \label{fig:time2_double}
\end{figure}

%\bibliographystyle{apsrev4-1}
%\bibliography{ref}
%

\end{document}